\newtheorem{theorem}{Theorem}
\newtheorem{lemma}{Lemma}
\newtheorem{remark}{Remark}
\theoremstyle{definition}
\newtheorem{definition}{Definition}
\begin{document}

\title{Robust Improper Signaling for Two-user SISO Interference Channels}

\author{Mohammad Soleymani$^*$, \emph{Student Member, IEEE}, Christian Lameiro$^*$, \emph{Member, IEEE}, Ignacio Santamaria$^\dag$ \emph{Senior Member, IEEE} and Peter J. Schreier$^*$, \emph{Senior Member, IEEE}
\\ \thanks{ 
$^*$Mohammad Soleymani, Christian Lameiro and Peter J. Schreier are with the Signal and System Theory Group, Universit\"at Paderborn, Germany, http://sst.upb.de  (emails: \protect\url{{mohammad.soleymani, christian.lameiro, peter.schreier}@sst.upb.de}).  

$^\dag$Ignacio Santamaria is with the Department of Communications Engineering, University of Cantabria (email: \protect\url{i.santamaria@unican.es}).

The work of M. Soleymani, C. Lameiro and P. J. Schreier was supported by the German Research Foundation
(DFG) under grants LA 4107/1-1 and SCHR 1384/8-1. 

The work of I. Santamaria was supported by MINECO of Spain and AEI/FEDER funds of the E.U., under grant TEC2016-75067-C4-4-R (CARMEN).
}

}

\maketitle

\begin{abstract}
It has been shown that improper Gaussian signaling (IGS) can improve the performance of wireless interference-limited systems when perfect channel state information (CSI) is available. In this paper, we investigate the robustness of IGS against imperfect CSI on the transmitter side in a two-user single-input single-output (SISO) interference channel (IC) as well as in a SISO Z-IC, when interference is treated as noise. We assume that the true channel coefficients belong to a known region around the channel estimates, which we call the uncertainty region. Following a worst-case robustness  approach, 
we study the rate-region boundary of the IC for the worst channel in the uncertainty region. For the two-user IC, we derive a robust design in closed-form, which is independent of the phase of the channels by allowing only one of the users to transmit IGS. For the Z-IC, we provide a closed-form design for the transmission parameters by considering an enlarged uncertainty region  and allowing both users to employ IGS. In both cases, the IGS-based designs are ensured to perform no worse than proper Gaussian signaling. Furthermore, we show, through numerical examples, that the proposed robust designs significantly outperform non-robust solutions. 
\end{abstract} 
\begin{keywords}
Achievable rate region, improper Gaussian signaling, imperfect CSI, 
 two-user interference channel,  worst-case robustness.
\end{keywords}

\section{Introduction}
The increasing number of devices with wireless connectivity, on the one hand, and the limited availability of radio resources, on the other hand, motivate the design of techniques to exploit the spectrum more efficiently. 
As a consequence, modern wireless communications systems are mostly {\em interference-limited}. 
Hence, interference management techniques play an important role in improving the performance of such systems. 
A way to increase the spectral efficiency of interference-limited systems is to employ improper Gaussian signaling (IGS). 
In IGS, the real and imaginary parts of the signal are correlated and/or have unequal powers \cite{schreier2010statistical}. 
Although proper Gaussian signaling (PGS) achieves the capacity of traditional wireless communication channels such as point-to-point communications \cite{cover2012elements}, IGS has been shown to improve the performance of several interference-limited systems, such as interference channels (IC) \cite{cadambe2010interference,yang2014interference, ho2012improper, zeng2013transmit, nguyen2015improper, lameiro2013degrees, lagen2016superiority, kurniawan2015improper,  lameiro2017rate}, underlay and overlay cognitive radio (CR) systems \cite{lameiro2015benefits,amin2016underlay,amin2017overlay}, relay networks \cite{ho2013optimal, javed2017full, gaafar2018full}, and broadcast channels (BC) with (widely) linear transceivers \cite{hellings2013performance,hellings2013qos}, to mention a few.

\subsection{Related work}
IGS was studied as an effective interference management tool for the first time in \cite{cadambe2010interference},  where the authors showed that IGS can increase the degrees of freedom (DoF) in the three-user IC. 
In \cite{yang2014interference}, the authors showed that IGS increases the DoF of a multiple-input multiple-output (MIMO) two-user X-IC\footnote{The two-user X-IC is a generalization
of the two-user IC where there is an independent message from each transmitter to each receiver \cite{huang2012interference}.}. 
In \cite{ho2012improper}, the authors showed that IGS can improve  the performance of a single-input single-output (SISO) two-user IC. 
In \cite{zeng2013transmit}, the authors studied the achievable rate of IGS in MIMO ICs and proposed algorithms to derive the rate region of the two-user SISO IC. 
The paper \cite{zeng2013transmit} showed that IGS can enlarge the rate region of the two-user SISO-IC.    
In \cite{nguyen2015improper}, the authors showed that IGS can reduce the symbol error rate in a $K$-user IC. 
 In \cite{lagen2016superiority, kurniawan2015improper, lameiro2017rate}, it was shown that IGS can increase the achievable rate of the Z-IC\footnote{Z-IC  is a special case of the two-user IC, in which only one of the users interferes with the other user.} in different scenarios. 
 The authors in \cite{lameiro2015benefits} considered an underlay CR (UCR) system and derived a condition on the ratio between the gain of the interference and direct channel coefficients for IGS to outperform PGS.

All the aforementioned works assumed that {\em global} channel state information (CSI) is 
available at every transmitter and receiver, meaning that the channel
knowledge at the transmitters is 
instantaneous and globally available. 
However, global and perfect CSI at every transmitter is a very restrictive assumption. 
Therefore, a great deal of 
work has aimed at relaxing this assumption and exploiting either imperfect or statistical CSI at the transmitter (CSIT). 
For instance, \cite{amin2016underlay} studied an underlay CR system, in which the secondary user (SU)  transmitter has access only to the average CSI. This investigation showed that IGS achieves lower outage probability than its proper counterpart.
In \cite{amin2017overlay}, the authors considered both global CSI and partial CSI scenarios in an overlay CR system. 
Specifically, with partial CSIT the primary
 user has access only to 
 the average CSI. 
In this setting, the authors showed that IGS reduces the outage probability of the PU link. 

In the aforementioned papers, it was assumed that the existing CSI, either instantaneous or statistical, is perfect. 
However, in  practical scenarios, the CSI is always imperfect. Furthermore, acquiring the CSI at the transmitter side is more difficult than at the receiver side. 
Thus, it is critical to investigate whether IGS is still beneficial when the available CSI is imperfect. 
To design the PGS scheme, only the channel gains are required. 
However, IGS typically requires the phases of the channels in addition to the gains in order to design the optimal complementary variances. 
Since IGS needs more detailed CSI than PGS, it might be reasonable to expect  that IGS is more affected by imperfect CSI and its benefit will decrease when the CSI quality decreases as well.  

On the other hand, the use of IGS in combination with interference alignment (IA)  obtains DoF improvements in certain scenarios such as three-user IC \cite{cadambe2010interference} or two-user MIMO X-IC \cite{yang2014interference}. 
In other scenarios, such as the two-user IC with partial or full transmitter cooperation, the use of IGS does not offer DoF advantages, although IA  still offers improvements in terms of generalized degrees of freedom (GDoFs), which are a refinement of the DoF metric. 
However, these asymptotic (in the high SNR regime) DoF or GDoF benefits are typically lost under finite precision CSIT, a phenomenon known as DoF collapse. 
For instance, under limited CSIT, the DoF collapse for the two-user multiple-input, single-output (MISO) broadcast channel (BC) even when perfect channel knowledge for one user is available \cite{tandon2013synergistic,davoodi2016aligned}. 
It is also shown in  \cite{davoodi2017transmitter}  that, under finite precision CSIT, the sum GDoF of the two-user X channel and two-user BC collapse and hence, the benefits of IA are entirely lost under finite precision CSIT.

In this paper, we do not study possible DoF or GDoF benefits of IGS, but rather the advantages that IGS can provide in terms of rate when the interference is treated as noise.  
Treating interference as noise (TIN) is an attractive technique because of its simplicity, which also turns out to be optimal in terms of GDoF if the desired signals at every receiver are strong enough \cite{geng2015optimality}. 
Nevertheless, the optimal IGS parameters typically depend on the gains and phases of the channels, so the following question arises:
{\em 
Is IGS still beneficial in the presence of imperfect CSI?} 
We will address this question in this work.

\subsection{Contribution}
In this paper, 
we investigate the robustness of IGS against imperfect CSI  on the transmitter side. 
To the best of our knowledge, this is the first work to analyze the effect of imperfect CSI on IGS. 
We employ a 
 {\em worst-case} robustness approach to derive optimal robust IGS designs \cite{ben2009robust,vorobyov2003robust,tajer2011robust,wang2013robust, feng2015robust,   mochaourab2012robust, shaverdian2014robust, wang2015robust, pascual2006robust}. Thereby, we assume that the true channels are within a known bounded region around the available CSI estimate with a certain probability, which we call the uncertainty region. Robustness is then achieved by performing an optimization for the worst-case channels within the uncertainty region. 
 
 We study the robustness issue in the two-user IC and in the Z-IC when the CSI at the transmitter side is imperfect.
We consider a scenario in which the transmitters have access to a noisy estimate of the channel coefficients, 
and we propose closed-form robust designs of the transmission parameters that achieve the robust rate region of the two-user IC and Z-IC with TIN. 
To this end, we first extend the results in \cite{lameiro2015benefits} to derive a worst-case robust design for the two-user IC. 
In \cite{lameiro2015benefits}, an UCR scenario was considered, in which the primary user (PU), unaware of the secondary user (SU), employs PGS and transmits with maximum power. 
In UCR, there are two types of users, PUs and SUs. 
PUs are licensed users and have priorities to use the resources; on the other hand, SUs can transmit only if they do not disturb the communications of the PUs.
However, in the two-user IC, a higher degree of cooperation between users may be allowed to achieve a better performance. 
Thus, we  derive the Pareto-optimal boundary of the robust rate region for the two-user IC when at most one user employs IGS, which makes the scheme robust against imperfect CSI and represents a generalization of the scenario studied in \cite{lameiro2015benefits}. 
We then extend the design in  \cite{lameiro2017rate} to a robust design for the Z-IC. 
The main challenge of the robust design for the Z-IC is to consider the phase uncertainty, which makes the optimization problems more difficult than in the setting of \cite{lameiro2017rate}. 
  The main contributions of this work are  the following:
\begin{itemize}

\item We first consider a two-user IC scenario and derive a robust scheme in closed-form 
by  allowing only one of the users to employ IGS. 
 An important advantage of this suboptimal scheme is that it is independent of the channel phase information at the transmitters. We derive a sufficient and necessary condition for the proposed IGS scheme to outperform PGS in this scenario.

\item 
Although the results for the two-user IC can be applied to the Z-IC, we propose another robust design for the Z-IC that allows both users to employ IGS simultaneously.   
We also 
derive closed-form conditions for this IGS scheme to outperform PGS. 
These conditions, which depend on the accuracy of the CSI at the transmitter side, provide a robust version of the design in \cite{lameiro2017rate}. 

\item 
Our results show 
that even in the presence of imperfect CSI, robust IGS can substantially outperform robust PGS designs. This is in contrast to non-robust IGS designs, which may be strongly affected by uncertain CSI. 
Our results show that improper signaling is even more robust to imperfect CSI than its proper counterpart when the transmission parameters are optimized in a robust way.
\end{itemize}

\subsection{Paper outline}
The rest of this paper is organized as follows. In Section \ref{secII}, we describe the system model and assumptions the assumptions made about the CSI knowledge and formulate the optimization problem. 
  In Section \ref{IX-ch}, we derive the robust rate region for the two-user IC.  
We derive the robust design of the parameters for the Z-IC in Section \ref{secIII}.
In Section \ref{secIV}, we present some numerical results.

\section{System model and problem statement}\label{secII}
In this section, we describe our system model and define all parameters. 
Section \ref{secII-A} briefly presents some background material on improper signal processing used in the paper. 
In Section \ref{secII-B}, we describe the system model and Section \ref{secII-C} presents the imperfect CSI model. 
Section \ref{secII-D} describes the optimization problem to derive the boundary of the robust rate region.
\subsection{Preliminaries of IGS}\label{secII-A}
A zero-mean complex Gaussian random variable $x$ can be uniquely specified by its variance, $p_x=\mathbb{E}\{|x|^2\}$, and complementary variance, $\mathbb{E}\{x^2\}$ \cite{schreier2010statistical}.  
The circularity coefficient of a complex Gaussian random variable is defined as $\kappa_x=\frac{|\mathbb{E}\{x^2\}|}{\mathbb{E}\{|x|^2\}}$, which takes values between 0 and 1. 
We call $x$ proper if $\kappa_x=0$; otherwise, we call it improper. Moreover, we call $x$ maximally improper if $\kappa_x=1$.
The complementary variance of $x$ can be written as $\mathbb{E}\{x^2\}=p_x\kappa_x e^{j\phi_x}$, where $\phi_x$ is the phase of $\mathbb{E}\{x^2\}$.

\subsection{Signal model}\label{secII-B}
 \begin{figure}[t]
\centering
\includegraphics[width=0.32\textwidth]{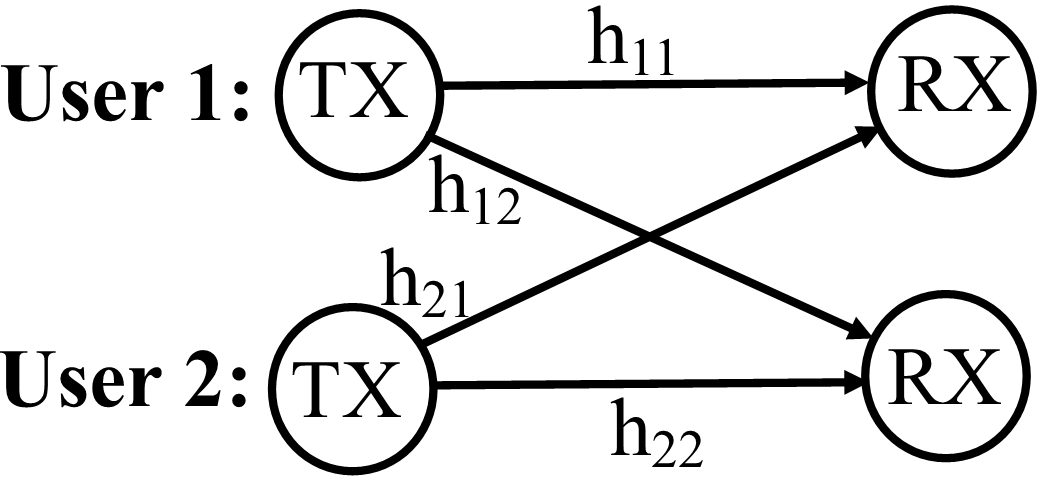}
\caption{Channel model for the two-user SISO IC.}
\label{Fig1}
\end{figure}
We consider a two-user IC, as depicted in Fig. \ref{Fig1}, where both users may transmit improper Gaussian signals. 
In the two-user IC, transmitter $i$ transmits its message, $x_i$, to receiver $i$.
The received signals for user $i\in\{1,2\}$ is\footnote{Note that in the Z-IC, the coefficient $h_{12}=0$.} 
\begin{equation}
y_i=h_{ii}x_i+h_{\bar{\imath}i}x_{\bar{\imath}}+n_i,
\end{equation}
where $\bar{\imath}\in\{1,2\}$ and $\bar{\imath}\neq i$. 
Moreover, $x_i$, $h_{ji}$, $n_i$ for $i,j\in\{1,2\}$ are, respectively, the transmit signal of user $i$, the link from transmitter $j$ to receiver $i$, and the additive noise at the $i$th receiver, which is assumed to be zero-mean proper complex Gaussian with variance $\sigma^2$.
The transmitted signals $x_1$ and $x_2$ are complex Gaussian and may be improper.

Since we treat interference as noise,  the achievable rate of user $i\in\{1,2\}$ is \cite{lameiro2017rate,zeng2013transmit}
\begin{multline}
R_i=  
\frac{1}{2}\!\log_2\!\!\left(\!\!\frac{\left(\sigma^2\!+\sum_{j=1}^{2}p_j|h_{ji}|^2\!\right)^2\!\!\!}{(p_{\bar{\imath}}|h_{\bar{\imath}i}|^2+\sigma^2)^2-\left(p_{\bar{\imath}}\kappa_{\bar{\imath}}|h_{\bar{\imath}i}|^2\right)^2}\right.\\
\left.-\frac{\!\left|\sum_{j=1}^{2} p_j\kappa_j|h_{ji}|^2e^{j(2\measuredangle h_{ji}+\phi_j)}\right|^2}
{(p_{\bar{\imath}}|h_{\bar{\imath}i}|^2+\sigma^2)^2-\left(p_{\bar{\imath}}\kappa_{\bar{\imath}}|h_{\bar{\imath}i}|^2\right)^2}\!\right)\!\!,
\label{R1}
\end{multline}
where $p_j$, $\kappa_j$, and $\phi_j$ for $j=1,2,$ are, respectively, the transmission power, circularity coefficient and phase of the complementary covariance of the signal transmitted by user $j$. 
Moreover,  $|h_{ij}|$ and $\measuredangle h_{ij}$ for $i,j\in\{1,2\}$ are the magnitude and phase of the channel from the $i$th transmitter to the $j$th receiver, respectively. 
Note that the rate expressions in \eqref{R1} are the maximum rates that can be supported by the channels. 
\subsection{Uncertainty model}\label{secII-C}
We assume perfect CSI at the receivers but imperfect CSI at the transmitters. 
 It is reasonable to assume that a receiver knows the CSI perfectly since acquiring CSI at the receiver side is relatively easy with training sequences or applying  blind/semi-blind estimation methods \cite{shen2013robust, wang2009worst, pei2011secure}. 
 On the other hand, the channel information is typically quantized and then sent to the transmitters through a noisy feedback link. Therefore we have imperfect CSIT \cite{shen2013robust, wang2009worst, shaverdian2014robust, pei2011secure,wei2017optimal}.
 In this paper, 
 we assume that the true channel, $h_{ij}$ for $i,j\in\{1,2\}$, lies in a vicinity of the channel estimate at the transmitter side, $\hat{h}_{ij}$, i.e., $\hat{h}_{ij}=h_{ij}+e_{ij}$, where $e_{ij}$ accounts for all sources of error between 
 the estimate and the true channel. 
 We assume that the true channel $h_{ij}$ belongs to an uncertainty set $\mathcal{E}_{ij}$, which includes $\hat{h}_{ij}$. 
 We do not restrict our model to any specific error source and consider an arbitrary model for the uncertainty set, as illustrated in Fig. \ref{Fig2-1}. 
 \begin{figure}[t]
\centering
\includegraphics[width=0.36\textwidth]{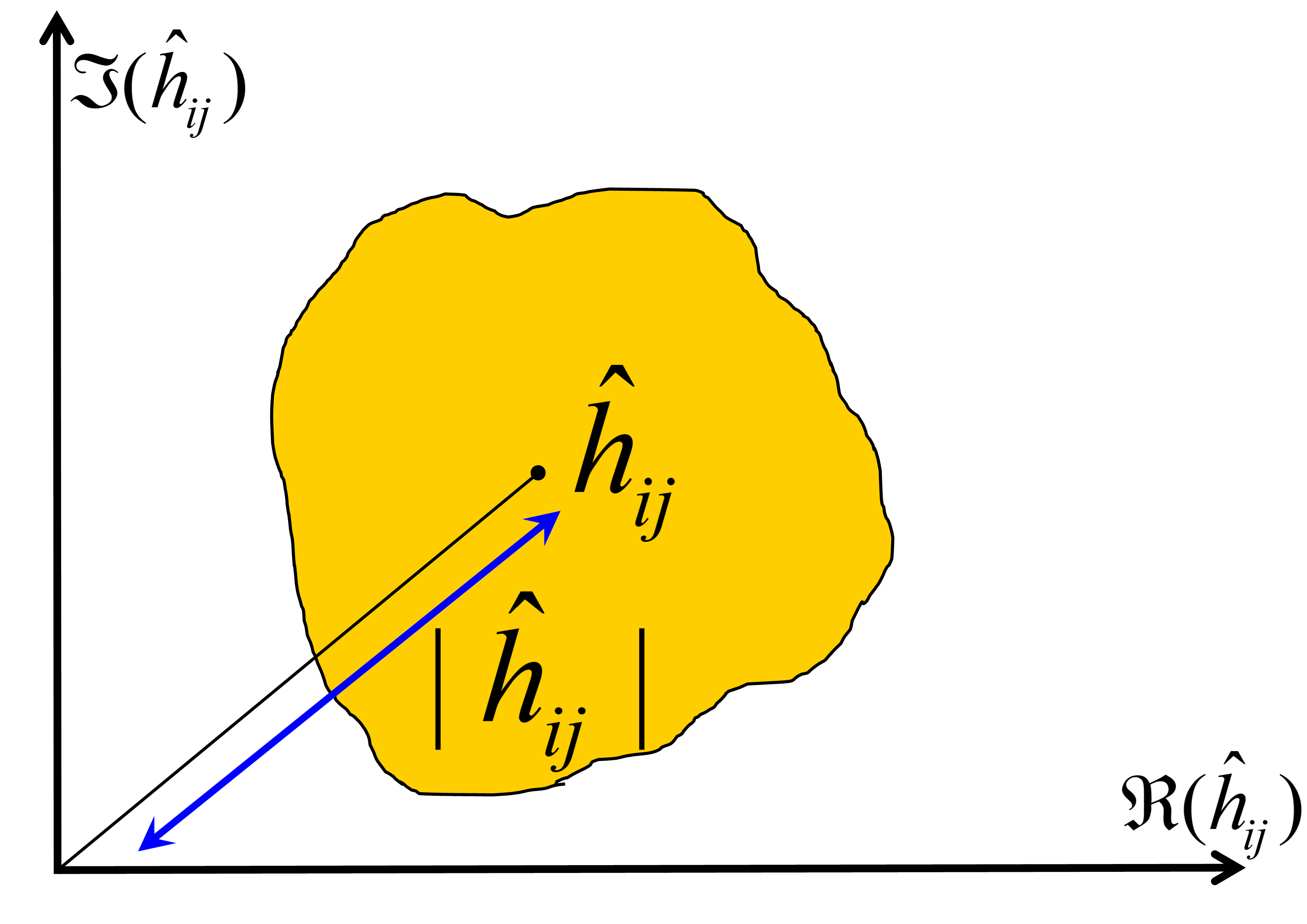}
\caption{Arbitrary channel uncertainty region.}
\label{Fig2-1}
\end{figure}

\subsection{Problem statement}\label{secII-D}
Our aim in this work is to derive the boundary of the robust rate region for a given channel uncertainty set. We employ the definition of the Pareto boundary for the rate region, which is given next. 
\begin{definition}[\!\!\cite{lameiro2017rate, jorswieck2008complete}]
The rate pair ($R_1,R_2$) is called Pareto-optimal if ($R_1^{\prime},R_2$) and ($R_1,R_2^{\prime}$), with $R_1^{\prime}>R_1$ and $R_2^{\prime}>R_2$ , are not achievable.
\end{definition}

Employing the concept of {\em worst-case robustness} \cite{wang2013robust, feng2015robust,pascual2006robust}, we define the boundary of the robust rate region as the Pareto-optimal points that are achievable for all possible channels inside the uncertainty region \cite{mochaourab2012robust}. Therefore, the robust rate region is the union of all these achievable rate tuples, i.e., 
 \begin{equation}
\mathcal{R}= \underset{\varrho\in\Omega}{\bigcup}(\underset{h_{ij}\in\mathcal{E}_{ij}}{\min}R_1,\underset{h_{ij}\in\mathcal{E}_{ij}}{\min}R_2),\label{rateregion}
\end{equation}
where $\varrho=\!\{p_i,\kappa_i,\phi_i,i=1,2\}$, and $\Omega\!=\!\{p_i,\kappa_i,\phi_i\!:0\leq p_i\leq P_i,0\leq\kappa_i\leq 1,0\leq\phi_i\leq2\pi,\,i=1,2\}$ 
are the design parameters, and the feasible set of the design parameters, 
respectively, with $P_i$ being the power budget of user $i$. 
Hereafter, we represent the worst-case rate of user $i$ by $ R_{i}^w\triangleq\underset{h_{ij}\in\mathcal{E}_{ij}}{\min}R_i$ for $i\in\{1,2\}$.
\begin{figure*}
\setcounter{equation}{6}
\begin{align}
\label{eq-new-8}\kappa_{\bar{\imath}}&=\left\{ \begin{array}{cccc} 0 &\text{if}& \frac{|\tilde{h}_{\bar{\imath}i}|^2(\sigma^2+P_i|\tilde{h}_{i\bar{\imath}}|^2)}{|\tilde{h}_{\bar{\imath}\bar{\imath}}|^2\sigma^2}\leq1-\frac{\gamma_i(1)}{\gamma_i(2\alpha)},&\\
\kappa^{\star} &\text{if}&\frac{|\tilde{h}_{\bar{\imath}i}|^2(\sigma^2+P_i|\tilde{h}_{\bar{\imath}i}|^2)}{|\tilde{h}_{\bar{\imath}\bar{\imath}}|^2\sigma^2}>1-\frac{\gamma_i(1)}{\gamma_i(2\alpha)}&\text{and}\,\,\,\,\,\, \mathcal{P}_{\bar{\imath}}(1)\leq P_{\bar{\imath}},\\
1&&\text{otherwise},&\\
\end{array}\right.
\\
\label{eq-new-9}\mathcal{P}_{\bar{\imath}}(\kappa)&=\left[\sqrt{(1-\frac{\gamma_i(1)}{\gamma_i(2\alpha)})^2+(1-\kappa^2)\left(\frac{\gamma_i(2)}{\gamma_i(2\alpha)}-1\right)}-(1-\frac{\gamma_i(1)}{\gamma_i(2\alpha)})\right]\dfrac{\sigma^2}{|\tilde{h}_{\bar{\imath}i}|^2(1-\kappa^2)},
\end{align}
\hrulefill
\end{figure*}

In order to derive the boundary of the robust rate region, we do not need to optimize over all the six design variables, i.e., the powers, circularity coefficients, and the phases of the complementary variances. 
 According to \eqref{R1}, the achievable rates are functions of the phase difference between the phases of the complementary variances, i.e., $\Delta_{\phi}=\phi_1-\phi_2$. 
The reason is that, as can be observed through (\ref{R1}), $R_i$ depends on the phase parameters only through the term 
\setcounter{equation}{3}
\begin{align}\label{sikim}
\nonumber t&\triangleq\left| p_i\kappa_i|h_{ii}|^2e^{j(2\measuredangle h_{ii}+\phi_i)}+p_{\bar{\imath}}\kappa_{\bar{\imath}}|h_{\bar{\imath}i}|^2e^{j(2\measuredangle h_{\bar{\imath}i}+\phi_{\bar{\imath}})}\right|^2,\\
&=\left| p_i\kappa_i|h_{ii}|^2+p_{\bar{\imath}}\kappa_{\bar{\imath}}|h_{\bar{\imath}i}|^2e^{j(2(\measuredangle h_{\bar{\imath}i}-\measuredangle h_{ii})+\phi_{\bar{\imath}}-\phi_i)}\right|^2
\end{align}
where $i,\bar{\imath}\in\{1,2\}$ and $\bar{\imath}\neq i$. 
Furthermore, it is shown in \cite[Theorem 2]{park2013sinr} that, in every point on the Pareto-optimal boundary of the rate region  for this problem, 
at least one user transmits with maximum power.
As a result, we can further simplify the problem by taking the power of one user equal to its maximum power, and solve two optimization problems with only {\em four} optimization parameters, namely the two circularity coefficients, $\Delta_{\phi}$, and the power of one of the users. 


Let us denote the user that transmits with maximum power by $i$. The boundary of the robust rate region, when user $i$ transmits with maximum power, can be derived by solving
\begin{equation}
 \underset{\Delta_{\phi},\,\,0\leq p_{\bar{\imath}}\leq P_{\bar{\imath}},\,\,0\leq \kappa_1,\kappa_2\leq 1}{\text{maximize}}\,\,\,\,\,\,\,\,  
  R_{\bar{\imath}}^w \,\,\,\,\,\,\,\,\,\,\,\,\,\,\,\, 
  \text{s.t.}  \,\,\,\,\,\,\,\,   
  R_{i}^w\geq\alpha R_{i,\max}^w,
\label{rateregion1}
\end{equation}
\!for $i,\bar{\imath}\in\{1,2\}$ and $i\neq \bar{\imath}$, and a fixed $\alpha\in [0,1]$, where $R_{i,\max}^w$ is the maximum worst-case rate of user $i$, which is obtained with $p_{\bar{\imath}}=0$ and PGS \cite{cover2012elements}. 
That is, we maximize the worst-case rate of user $\bar{\imath}$ for every feasible worst-case rate of user $i$. 
It is worth mentioning that deriving the optimal solution of \eqref{rateregion1} in polynomial time is not possible in general
since \eqref{rateregion1} is not convex  and has an infinite number of non-convex constraints. These infinite number of constraints could be reduced to a single one if an expression for the worst-case channels within the uncertainty sets were identified. However, this is also not possible for arbitrary uncertainty regions. 
To the best of our knowledge, even with perfect CSIT, there are only numerical approaches that provide a suboptimal solution, 
e.g., \cite{zeng2013transmit}, for the rate region of the two-user IC.

\section{Robust design for the two-user IC}\label{IX-ch}

In this section, we propose a robust design for the two-user IC by simplifying the original problem \eqref{rateregion1}. In particular, by allowing only one user to employ IGS, i.e., by setting either $\kappa_1$ or $\kappa_2$ to zero, we can easily find the worst-case channels for arbitrary uncertainty sets. In turn, an optimization problem that approximates \eqref{rateregion1} is obtained, which, although still non-convex, its global optimal solution admits a closed form.

If at most one of the users employs IGS, the achievable rates are independent of the phases of the channel coefficients since $R_i$ depends on the phase parameters only through the term $t$ in \eqref{sikim}. 
Thus, if $\kappa_i=0$ or $\kappa_{\bar{\imath}}=0$, $t$ is independent of the phases of the channel coefficients and as a result,
these phases are not required at the transmitter side in order to optimize the rate. 
Since our proposed robust algorithm is independent of the phases of the channels, 
it requires only the worst-case channel gains and consequently can be applied to any uncertainty model. 
In the following, 
we first derive the worst-case channel gains and then derive the optimal parameters in closed-form. 

It is easy to see that the rate of each user is a strictly
 increasing function of the gain of the corresponding direct link, i.e., $\frac{\partial R_1}{\partial |h_{11}|^2}> 0$ and $\frac{\partial R_2}{\partial |h_{22}|^2}> 0$.
Moreover, the rates of users are  
decreasing functions of the interference channel gain, i.e., $\frac{\partial R_1}{\partial |h_{21}|^2}\leq 0$ and $\frac{\partial R_2}{\partial |h_{12}|^2}\leq 0$.
Thus, the worst-case channel gains are 
$|\tilde{h}_{11}|^{2}=\underset{x\in\mathcal{E}_{11}}{\min}\,\,|x|^2$, $|\tilde{h}_{22}|^{2}=\underset{x\in\mathcal{E}_{22}}{\min}\,\,|x|^2$, $|\tilde{h}_{21}|^{2}=\underset{x\in\mathcal{E}_{21}}{\max}\,\,|x|^2$, and $|\tilde{h}_{12}|^{2}=\underset{x\in\mathcal{E}_{12}}{\max}\,\,|x|^2$.

To derive the boundary of the robust rate region, we have to optimize only two parameters, i.e., the power of the user that may not transmit with maximum power, and the circularity coefficient of the user that may employ IGS.  
Thus, every point of the robust rate region that is achievable by this scheme can be derived by one of the following 
strategies: 
\begin{enumerate}
\item Strategy 1: The PGS user transmits with maximum power,
\item Strategy 2: The IGS user transmits with maximum power.
\end{enumerate} 
Let us denote the rate region achieved by the $k$th strategy as $\mathcal{R}_k$. The robust rate region achieved by the proposed scheme 
is the union of the achievable rate regions of the above strategies, i.e., $\mathcal{R}= \bigcup_{k=1}^2\mathcal{R}_k$. 
It is worth mentioning that $\mathcal{R}_k$ is also a union of two different strategies since there is no difference between users and either of them can be  the IGS user. In the following subsections, we derive the Pareto-optimal achievable rate region of each strategy.

\subsection{Achievable rate region for strategy 1}
In strategy 1 the PGS user transmits with maximum power. Without loss of generality, let us assume that the IGS user is user $\bar{\imath}$. Then, the optimization problem is
\begin{subequations}
\begin{alignat}{4}
 \underset{0\leq p_{\bar{\imath}}\leq P_{\bar{\imath}},\,\,0\leq\kappa_{\bar{\imath}}\leq 1}{\text{maximize}}\,\,\,\,\,\,\,\,  & 
  R_{\bar{\imath}}^w(p_{\bar{\imath}},\kappa_{\bar{\imath}}) \\
  \text{s.t.}  \,\,\,\,\,\,\,\, \,\,\,\,\,\,\,\,\,\,\,\,\,\,\,\,&  
  R_{i}^w(p_{\bar{\imath}},\kappa_{\bar{\imath}})\geq\alpha R_{i,\max}^w,
\end{alignat}
\label{pareto-2}
\end{subequations}
\!for $i,\bar{\imath}\in\{1,2\}$ and $i\neq \bar{\imath}$, and a fixed $\alpha\in [0,1]$, where $R_{i,\max}^w=\log_2(1+\frac{P_i|\tilde{h}_{ii}|^2}{\sigma^2})$. 
Note that the worst-case rates are derived by replacing the worst-case channel gains in \eqref{R1}. 
The  achievable robust rate region can be derived by varying $\alpha\in [0,1]$. 
In \cite{lameiro2015benefits}, a similar scenario in the context of cognitive radio was studied, and \eqref{pareto-2} was solved.
 Thus, we can apply the results in \cite{lameiro2015benefits} to obtain the Pareto-optimal parameters for strategy 1  as presented in the following theorem. 
\begin{theorem}
\label{theorem2}
Let us define $\gamma_i(\alpha)=2^{\alpha R_{i,\max}^w-1}$. 
The Pareto-optimal parameters for transmission strategy 1 are given by \eqref{eq-new-8} and $p_{\bar{\imath}}=\mathcal{P}_{\bar{\imath}}(\kappa_{\bar{\imath}})$, 
where $\kappa^{\star}=\sqrt{1-\frac{\sigma^2}{p_{\bar{\imath}}|\tilde{h}_{\bar{\imath}i}|^2}\left[\left(\frac{\gamma_i(2)}{\gamma_i(2\alpha)}-1\right)\frac{\sigma^2}{p_{\bar{\imath}}|\tilde{h}_{\bar{\imath}i}|^2}-2(1-\frac{\gamma_i(1)}{\gamma_i(2\alpha)})\right]}$, and $\mathcal{P}_{\bar{\imath}}(\kappa)$ is given by  \eqref{eq-new-9}. 
\end{theorem}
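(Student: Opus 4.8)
The plan is to reduce \eqref{pareto-2} to a deterministic scalar optimization and then solve it in closed form. First, because only user $\bar{\imath}$ transmits improper signals, the term $t$ in \eqref{sikim} carries no phase dependence, so each rate in \eqref{R1} depends on the channels only through their gains; together with the monotonicity facts already recorded above---each $R_i$ is strictly increasing in its direct gain and non-increasing in the interference gain---the inner minimizations over $\mathcal{E}_{ij}$ are attained at the worst-case gains $|\tilde{h}_{ij}|^2$. Substituting these fixed quantities into \eqref{R1}, problem \eqref{pareto-2} becomes a two-variable problem in $(p_{\bar{\imath}},\kappa_{\bar{\imath}})$ whose structure coincides with the underlay-cognitive-radio rate-region problem solved in \cite{lameiro2015benefits}, under the identification ``primary user'' $\leftrightarrow$ user $i$ (proper, full power) and ``secondary user'' $\leftrightarrow$ user $\bar{\imath}$, with the channel gains replaced by their worst-case values and the primary-rate constraint set to $\alpha R_{i,\max}^w$. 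The theorem then follows by invoking that solution; below I outline the self-contained derivation it packages.

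Monotonicity does most of the remaining work. With the worst-case gains inserted, $R_{\bar{\imath}}^w$ is strictly increasing in $p_{\bar{\imath}}$ and non-increasing in $\kappa_{\bar{\imath}}$ (against proper interference-plus-noise, proper signaling maximizes the point-to-point rate), whereas $R_i^w$ is strictly decreasing in $p_{\bar{\imath}}$ and non-decreasing in $\kappa_{\bar{\imath}}$ (improper interference is easier to suppress at receiver $i$). Hence at any Pareto-optimal point the constraint $R_i^w\ge\alpha R_{i,\max}^w$ is active---otherwise $p_{\bar{\imath}}$ could be increased, strictly improving the objective---unless $p_{\bar{\imath}}$ is clipped at $P_{\bar{\imath}}$. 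Writing $2^{2R_i^w}=2^{2\alpha R_{i,\max}^w}$ and clearing denominators turns the active constraint into a quadratic equation in $p_{\bar{\imath}}$; its unique non-negative root is $p_{\bar{\imath}}=\mathcal{P}_{\bar{\imath}}(\kappa_{\bar{\imath}})$, i.e.\ \eqref{eq-new-9}, and one checks the other root is negative for every $\alpha\in[0,1]$. The problem then collapses to maximizing the scalar function $g(\kappa):=R_{\bar{\imath}}^w\big(\mathcal{P}_{\bar{\imath}}(\kappa),\kappa\big)$ over $\kappa\in[0,1]$ with $\mathcal{P}_{\bar{\imath}}(\kappa)\le P_{\bar{\imath}}$.

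It then remains to analyse $g$. I would compute $g'$ by the chain rule through both arguments of $R_{\bar{\imath}}^w$, show that $g$ is unimodal on $[0,1]$ (first non-decreasing, then non-increasing), and solve $g'(\kappa)=0$, which after simplification yields the stationary value $\kappa^{\star}$ stated in the theorem---note that $\kappa^{\star}$ and $p_{\bar{\imath}}=\mathcal{P}_{\bar{\imath}}(\kappa^{\star})$ are determined jointly. The sign of $g'(0)$ then gives the trichotomy of \eqref{eq-new-8}: if $\frac{|\tilde{h}_{\bar{\imath}i}|^2(\sigma^2+P_i|\tilde{h}_{i\bar{\imath}}|^2)}{|\tilde{h}_{\bar{\imath}\bar{\imath}}|^2\sigma^2}\le 1-\frac{\gamma_i(1)}{\gamma_i(2\alpha)}$, then $g'(0)\le 0$, $g$ is non-increasing, and the optimum is $\kappa_{\bar{\imath}}=0$; if the reverse inequality holds, $g$ increases near $0$, so the optimum is the interior stationary point $\kappa^{\star}$ provided it is reached within the feasible range---which is exactly the check $\mathcal{P}_{\bar{\imath}}(1)\le P_{\bar{\imath}}$---and otherwise $g$ is still increasing when $\kappa$ hits its feasibility limit, forcing $\kappa_{\bar{\imath}}=1$. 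Combined with $p_{\bar{\imath}}=\mathcal{P}_{\bar{\imath}}(\kappa_{\bar{\imath}})$ this is precisely \eqref{eq-new-8}--\eqref{eq-new-9}.

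The main obstacle is not the one-dimensional optimization, which is elementary once set up, but two structural points: first, the justification that the worst-case channels are the extreme gains of the uncertainty sets---this is exactly the step that forces the restriction to a single improper user, since with two improper users the phase term $t$ reappears and the rates are no longer monotone in the gains for a fixed phase design; and second, establishing unimodality of $g$ and correctly tracking the power-budget clipping $\mathcal{P}_{\bar{\imath}}(\kappa)\le P_{\bar{\imath}}$, so that the first-order condition delivers the global optimum and the case split is exhaustive. Since \cite{lameiro2015benefits} already performs this analysis in the equivalent cognitive-radio formulation, the most economical rigorous proof is to verify the above correspondence and quote that result.
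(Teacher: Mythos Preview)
Your proposal is correct and matches the paper's approach exactly: the paper's entire proof of this theorem is a one-line reference to Eq.~(11) and Theorem~1 of \cite{lameiro2015benefits}, relying on precisely the identification you spell out (user $i$ as the proper, full-power primary and user $\bar{\imath}$ as the improper secondary, with channel gains replaced by their worst-case values). Your additional outline of the constraint-activity, the reduction to a one-variable problem via $p_{\bar{\imath}}=\mathcal{P}_{\bar{\imath}}(\kappa)$, and the resulting trichotomy simply unpacks what \cite{lameiro2015benefits} proves, so there is nothing to add.
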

\begin{proof}
Please refer to Eq. (11) and Theorem 1 in \cite{lameiro2015benefits} for more details.
\end{proof}

\subsection{Achievable rate region for strategy 2}
In strategy 2 the IGS user transmits with maximum power. Without loss of generality, let us assume that the IGS user is user $\bar{\imath}$. Then, the optimization problem is 
\setcounter{equation}{8}
\begin{subequations}
\begin{alignat}{4}
 \underset{0\leq p_i\leq P_{i},\,\,0\leq \kappa_{\bar{\imath}}\leq 1}{\text{maximize}}\,\,\,\,\,\,\,\,  & 
  R_{i}^w(p_{i},\kappa_{\bar{\imath}}) \\
  \text{s.t.}  \,\,\,\,\,\,\,\, \,\,\,\,\,\,\,\,\,\,\,\,\,\,\,\,&  
 \label{prto-3-9d} R_{\bar{\imath}}^w(p_{i},\kappa_{\bar{\imath}})\geq\alpha R^w_{\bar{\imath},\max},
\end{alignat}
\label{prto-3}
\end{subequations}
\!for $i,\bar{\imath}\in\{1,2\}$ and $i\neq \bar{\imath}$, and a fixed $\alpha\in [0,1]$, where $R_{\bar{\imath},\max}^w=\log_2(1+\frac{P_{\bar{\imath}}|\tilde{h}_{\bar{\imath}\bar{\imath}}|^2}{\sigma^2})$. 
Similar to \eqref{pareto-2}, the robust achievable rate region can be derived by varying $\alpha\in [0,1]$.
The optimization problem \eqref{prto-3} resulting from strategy 2 has not been considered before in the literature. 
We present the Pareto-optimal parameters for this strategy in the subsequent theorem. 

\begin{theorem}\label{th:theorem2}
 The Pareto-optimal  signaling scheme for strategy 2 is IGS if and only if one of the following conditions is met. Additionally, the Pareto-optimal parameters for each condition are provided.  
 \begin{enumerate}
\item $P_i\leq\mathcal{P}_i(1)$, $\Rightarrow p_i=P_i$ and $\kappa_{\bar{\imath}}=1$,
  \item $P_i>\mathcal{P}_i(1)$, $(\zeta_1\beta_2-\zeta_2\beta_1)<0$ and $\tau<0$, $\Rightarrow p_i=\mathcal{P}_i(1)$ and $\kappa_{\bar{\imath}}=1$,
  \item $P_i>\mathcal{P}_i(1)$, $(\zeta_1\beta_2-\zeta_2\beta_1)<0$, $\tau>0$, and $\mathcal{P}(0)>x_1^{\star}$, $\Rightarrow p_i=\max(x_1^{\star},\mathcal{P}_i(1))$ and $\kappa_{\bar{\imath}}=\mathcal{K}(p_i),$
  \item $P_i>\mathcal{P}_i(1)$, $(\zeta_1\beta_2-\zeta_2\beta_1)>0$, $\tau<0$, $\mathcal{P}_i(1)<x_2^{\star}$, and $R^w_i(\mathcal{P}(1))>R^w_i(\mathcal{P}(0))$, $\Rightarrow p_i=\mathcal{P}_i(1)$ and $\kappa_{\bar{\imath}}=1$,
\end{enumerate}
where $i$ and $\bar{\imath}$ are the users that employ PGS and IGS in the transmission strategy 2, respectively, $\gamma_{\bar{\imath}}(\cdot)$ is defined as in Theorem \ref{theorem2}, and
\begin{align}
\label{eqo-304}x^{\star}_1&=\frac{-\zeta_1\tau-\sqrt{(\zeta_1\tau)^2-\beta_1\tau(\zeta_1\beta_2-\zeta_2\beta_1)}}{\zeta_1\beta_2-\zeta_2\beta_1},\\
\label{eqo-304-1}x^{\star}_2&=\frac{-\zeta_1\tau+\sqrt{(\zeta_1\tau)^2-\beta_1\tau(\zeta_1\beta_2-\zeta_2\beta_1)}}{\zeta_1\beta_2-\zeta_2\beta_1},\\
\nonumber \mathcal{P}_i(\kappa)&=\frac{1}{|\tilde{h}_{i\bar{\imath}}|^2}\\
&\hspace{0.5cm}\left[\left(\frac{1+\sqrt{1+\gamma_{\bar{\imath}}(2\alpha)(1-\kappa^2)}}{\gamma_{\bar{\imath}}(2\alpha)}\right)P_{\bar{\imath}}|\tilde{h}_{\bar{\imath}\bar{\imath}}|^2-\sigma^2\right],\label{eqo-3050}\\
\label{eqo-305}\mathcal{K}(p_i)&=\sqrt{1-\frac{(\sigma^2+p_i|\tilde{h}_{i\bar{\imath}}|^2)^2}{P_{\bar{\imath}}^2|\tilde{h}_{\bar{\imath}\bar{\imath}}|^4}\gamma_{\bar{\imath}}(2\alpha)+2\frac{\sigma^2+p_i|\tilde{h}_{i\bar{\imath}}|^2}{P_{\bar{\imath}}|\tilde{h}_{\bar{\imath}\bar{\imath}}|^2}},\\
\label{r-opt-2}
R^w_i(p_i)&= \frac{1}{2}\!\log_2\!\left(1+\!\frac{\zeta_1p_i^2+\beta_1p_i}{\zeta_2p_i^2+\beta_2p_i+\tau}\!\right).
\end{align}
Moreover,  $\beta_1$, $\zeta_1$,  $\beta_2$, $\zeta_2$ and $\tau$ are 
\begin{align}\label{zeta}
\nonumber \beta_1&=2|\tilde{h}_{ii}|^2(\sigma^2+P_2|\tilde{h}_{\bar{\imath}i}|^2),
\\ \nonumber \beta_2&=2(\sigma^2\gamma_{\bar{\imath}}(2\alpha)-P_{\bar{\imath}}|\tilde{h}_{\bar{\imath}\bar{\imath}}|^2)\frac{|\tilde{h}_{\bar{\imath}i}|^4|\tilde{h}_{i\bar{\imath}}|^2}{|\tilde{h}_{\bar{\imath}\bar{\imath}}|^4},
\\
\nonumber \zeta_1&=|\tilde{h}_{ii}|^4,\\
 \nonumber \zeta_2&=\frac{|\tilde{h}_{\bar{\imath}i}|^4|\tilde{h}_{i\bar{\imath}}|^4}{|\tilde{h}_{\bar{\imath}\bar{\imath}}|^4}\gamma_{\bar{\imath}}(2\alpha),\\
\tau&=\!\sigma^4\!\!+\!2\sigma^2P_{\bar{\imath}}|\tilde{h}_{\bar{\imath}i}|^2\!-\!2\sigma^2P_{\bar{\imath}}\frac{|\tilde{h}_{i\bar{\imath}}|^2}{|\tilde{h}_{\bar{\imath}\bar{\imath}}|^2}+\sigma^4\gamma_{\bar{\imath}}(2\alpha)\frac{|\tilde{h}_{\bar{\imath}i}|^4}{|\tilde{h}_{\bar{\imath}\bar{\imath}}|^4}.&&
\end{align}
\end{theorem}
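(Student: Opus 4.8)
The plan is to replace the channel gains in \eqref{R1} by the worst-case gains $|\tilde{h}_{ii}|^2$, $|\tilde{h}_{\bar{\imath}\bar{\imath}}|^2$, $|\tilde{h}_{i\bar{\imath}}|^2$, $|\tilde{h}_{\bar{\imath}i}|^2$ identified earlier in this section, which is legitimate here because, with only user $\bar{\imath}$ improper, neither $R_i^w$ nor $R_{\bar{\imath}}^w$ depends on any channel phase and both rates are monotone in the gains. I would then analyze the two-variable problem \eqref{prto-3} in $(p_i,\kappa_{\bar{\imath}})$ and first record two monotonicities that follow directly from the worst-case form of \eqref{R1}: $R_i^w$ is strictly increasing in $p_i$ and in $\kappa_{\bar{\imath}}$ (making the interferer more improper makes its interference less harmful at receiver $i$), whereas $R_{\bar{\imath}}^w$ is strictly decreasing in $p_i$ and in $\kappa_{\bar{\imath}}$. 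Consequently the feasible set of \eqref{prto-3} is bounded above in the $(p_i,\kappa_{\bar{\imath}})$-box by a single decreasing curve, and since the objective increases exactly in the direction that violates the constraint, every optimizer lies on the portion of the curve $R_{\bar{\imath}}^w(p_i,\kappa_{\bar{\imath}})=\alpha R_{\bar{\imath},\max}^w$ that lies inside the box. That equation is a quadratic in $p_i$; solving it for $p_i$ produces the decreasing function $p_i=\mathcal{P}_i(\kappa_{\bar{\imath}})$ in \eqref{eqo-3050}, and solving it for $\kappa_{\bar{\imath}}$ produces $\kappa_{\bar{\imath}}=\mathcal{K}(p_i)$ in \eqref{eqo-305}.

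If $P_i\le\mathcal{P}_i(1)$, the corner $(p_i,\kappa_{\bar{\imath}})=(P_i,1)$ is feasible and, by the two monotonicities, dominates every other feasible point; this gives Condition~1, and the scheme is IGS. Otherwise $P_i>\mathcal{P}_i(1)$, and every optimizer lies on the constraint curve for $p_i$ in the interval $[\mathcal{P}_i(1),\min(P_i,\mathcal{P}_i(0))]$, with $\kappa_{\bar{\imath}}=\mathcal{K}(p_i)$ decreasing from $1$ at the left endpoint down to $\mathcal{K}(P_i)$ or $0$ at the right endpoint. Substituting $\kappa_{\bar{\imath}}=\mathcal{K}(p_i)$ into $R_i^w$ collapses the objective on this curve to the single-variable expression \eqref{r-opt-2}: the numerator minus the denominator of the inner fraction equals $\zeta_1p_i^2+\beta_1p_i$ because the improper term cancels, and inserting the active constraint into the denominator gives $\zeta_2p_i^2+\beta_2p_i+\tau$, with the coefficients \eqref{zeta}. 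Differentiating $R_i^w(p_i)$ shows that its stationary points solve $(\zeta_1\beta_2-\zeta_2\beta_1)p_i^2+2\zeta_1\tau p_i+\beta_1\tau=0$, whose two roots are precisely $x_1^{\star}$ and $x_2^{\star}$ in \eqref{eqo-304}--\eqref{eqo-304-1}.

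It then remains to maximize $R_i^w(p_i)$ over $[\mathcal{P}_i(1),\min(P_i,\mathcal{P}_i(0))]$. Because $\zeta_1,\zeta_2,\beta_1>0$, the sign of the leading coefficient $\zeta_1\beta_2-\zeta_2\beta_1$ together with the sign of $\tau$ (equivalently of $\beta_1\tau$, the value of that quadratic at $p_i=0$) pins down the behavior of the quadratic on $(0,\infty)$: in one regime $R_i^w(p_i)$ is monotone decreasing, so the maximum is at $p_i=\mathcal{P}_i(1)$ with $\kappa_{\bar{\imath}}=1$; in another it has an interior maximum at $x_1^{\star}$; in another it has an interior minimum at $x_2^{\star}$, so the maximum is at an endpoint, selected by comparing \eqref{r-opt-2} at $\mathcal{P}_i(1)$ and at $\mathcal{P}_i(0)$. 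Matching each sign pattern---and imposing that the candidate point lie in the feasible interval and yield $\mathcal{K}(\cdot)\in(0,1]$---reproduces Conditions~2--4 and their parameters: $\{\zeta_1\beta_2-\zeta_2\beta_1<0,\ \tau<0\}$ gives the monotone-decreasing case and hence $p_i=\mathcal{P}_i(1),\kappa_{\bar{\imath}}=1$ (Condition~2); $\{\zeta_1\beta_2-\zeta_2\beta_1<0,\ \tau>0\}$ places a maximum at $x_1^{\star}$, retained only when $\mathcal{P}_i(0)>x_1^{\star}$ so that $\kappa_{\bar{\imath}}>0$, giving $p_i=\max(x_1^{\star},\mathcal{P}_i(1)),\kappa_{\bar{\imath}}=\mathcal{K}(p_i)$ (Condition~3); $\{\zeta_1\beta_2-\zeta_2\beta_1>0,\ \tau<0\}$ places a minimum at $x_2^{\star}$, so when $\mathcal{P}_i(1)<x_2^{\star}$ the optimum is an endpoint and IGS wins exactly when $R_i^w(\mathcal{P}_i(1))>R_i^w(\mathcal{P}_i(0))$, giving $p_i=\mathcal{P}_i(1),\kappa_{\bar{\imath}}=1$ (Condition~4). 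In every other regime the same analysis places the maximizer at an endpoint with $\kappa_{\bar{\imath}}=0$, i.e.\ PGS is Pareto-optimal, which yields the ``only if'' direction. The main obstacle is precisely this last, rather tedious, case-by-case bookkeeping: for each sign pattern one has to check that the claimed stationary point or endpoint actually lies in the feasible $p_i$-interval, that the associated $\mathcal{K}(\cdot)$ is a valid circularity coefficient, and that the side inequalities in the conditions are exactly those needed to exclude the competing PGS candidate; the argument parallels, but is heavier than, the single-improper-user analysis of \cite{lameiro2015benefits} behind Theorem~\ref{theorem2}.
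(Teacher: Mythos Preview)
Your proposal is correct and follows essentially the same route as the paper's proof in Appendix~\ref{app:theorem2}: both substitute the worst-case gains, show the constraint \eqref{prto-3-9d} is active (or that the corner $(P_i,1)$ is optimal, giving Condition~1), invert the active constraint to obtain $\mathcal{P}_i(\kappa)$ and $\mathcal{K}(p_i)$, collapse $R_i^w$ on the constraint curve to the rational form \eqref{r-opt-2}, and then read off Conditions~2--4 from the sign pattern of the quadratic numerator $(\zeta_1\beta_2-\zeta_2\beta_1)p_i^2+2\zeta_1\tau p_i+\beta_1\tau$ of $\partial R_i^w/\partial p_i$. Your write-up is in fact a bit more careful than the paper's in two respects: you make the four monotonicities of $R_i^w,R_{\bar\imath}^w$ in $(p_i,\kappa_{\bar\imath})$ explicit to justify activity of the constraint, and you track the feasible $p_i$-interval as $[\mathcal{P}_i(1),\min(P_i,\mathcal{P}_i(0))]$ rather than just $[\mathcal{P}_i(1),\mathcal{P}_i(0)]$.
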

\begin{proof}
Please refer to Appendix \ref{app:theorem2}.
\end{proof}
Note that if none of the conditions in Theorem \ref{th:theorem2} is fulfilled, PGS is Pareto-optimal for both users in strategy 2. 
Moreover, the parameters of the users can be easily derived through \eqref{pareto-2} by taking $\kappa_{\bar{\imath}}=0$.

The implication of Theorem \ref{th:theorem2} can be understood with the following example.
If, e.g., user 2 employs IGS, it causes less interference to user 1. 
Thus, user 1 can decrease its transmission power in order to meet the rate constraint of user 2 in \eqref{prto-3-9d}.
According to Theorem \ref{th:theorem2}, this power reduction of user 1, alongside with employing IGS by user 2, may even result in a rate increase for user 1. 
In other words, IGS allows the users to decrease the transmission power and simultaneously increase the achievable rate,
hence, improving as well the power efficiency.  
This is due to the fact that IGS can mitigate the negative effect of the 
interference, which 
may lead to an overall improvement of the system performance.

\section{Robust design for the Z-IC}\label{secIII}
The results in Section \ref{IX-ch} can be applied to the Z-IC by taking $h_{12}=0$. 
However, for this simplified scenario it is possible to obtain 
a better closed-form robust design for the Z-IC if we allow both users to employ IGS. 
When both users employ IGS, the phases of the channels are relevant  for the optimal strategy. Thus, even though problem  \eqref{rateregion1} is simpler for the Z-IC, it is still in general difficult to obtain the worst-case channels when both magnitude and phase are considered. Therefore, the constraint set of problem \eqref{rateregion1} cannot be reduced to a finite number of constraints and hence it is still difficult to solve in its current form. To overcome this, we approximate the original problem by considering a surrogate uncertainty region in which magnitude and phase are decoupled, so that their worst realizations can easily be found. 
It is also worth mentioning that this approach gives us a lower bound for the worst-case rates of problem \eqref{rateregion1}. 
Note that the surrogate uncertainty region must contain the original region as illustrated in Fig. \ref{Fig--2} in order for the constraints of \eqref{rateregion1} to be fulfilled after solving the approximated problem. Thus, we consider the enlarged uncertainty region 
 $\tilde{\mathcal{E}}_{ij}\!=\!\{x\in\mathbb{C}\!:|x|=|\hat{h}_{ij}|+e_{|h_{ij}|},\measuredangle x=\measuredangle \hat{h}_{ij}+ e_{\measuredangle h_{ij}}, |e_{|h_{ij}|}|\leq\delta_{ij},|e_{\measuredangle h_{ij}}|\leq\theta_{ij} \}$, where 
 $\delta_{ij}$ and $\theta_{ij}$ are the largest uncertainties in magnitude and phase, respectively.

The enlarged uncertainty region permits decoupling the errors in phase and magnitude 
 and hence allows us to find the worst-case phases and the worst-case channel gains independently. 
It is worth mentioning that, since the errors in phase and magnitude are not necessarily independent, a channel realization with both the worst-case channel phase and worst-case channel gains might not be in the original set. Hence, this approach provides a lower bound for the worst-case rates of \eqref{rateregion1}.
Note that the enlarged uncertainty region includes the original region as a subset, and the bounds of the errors in phase and magnitude are the same for both uncertainty regions. 
Thus, this approach can be applied to any arbitrary uncertainty model.

\begin{figure}[t]
\centering
\includegraphics[width=0.36\textwidth]{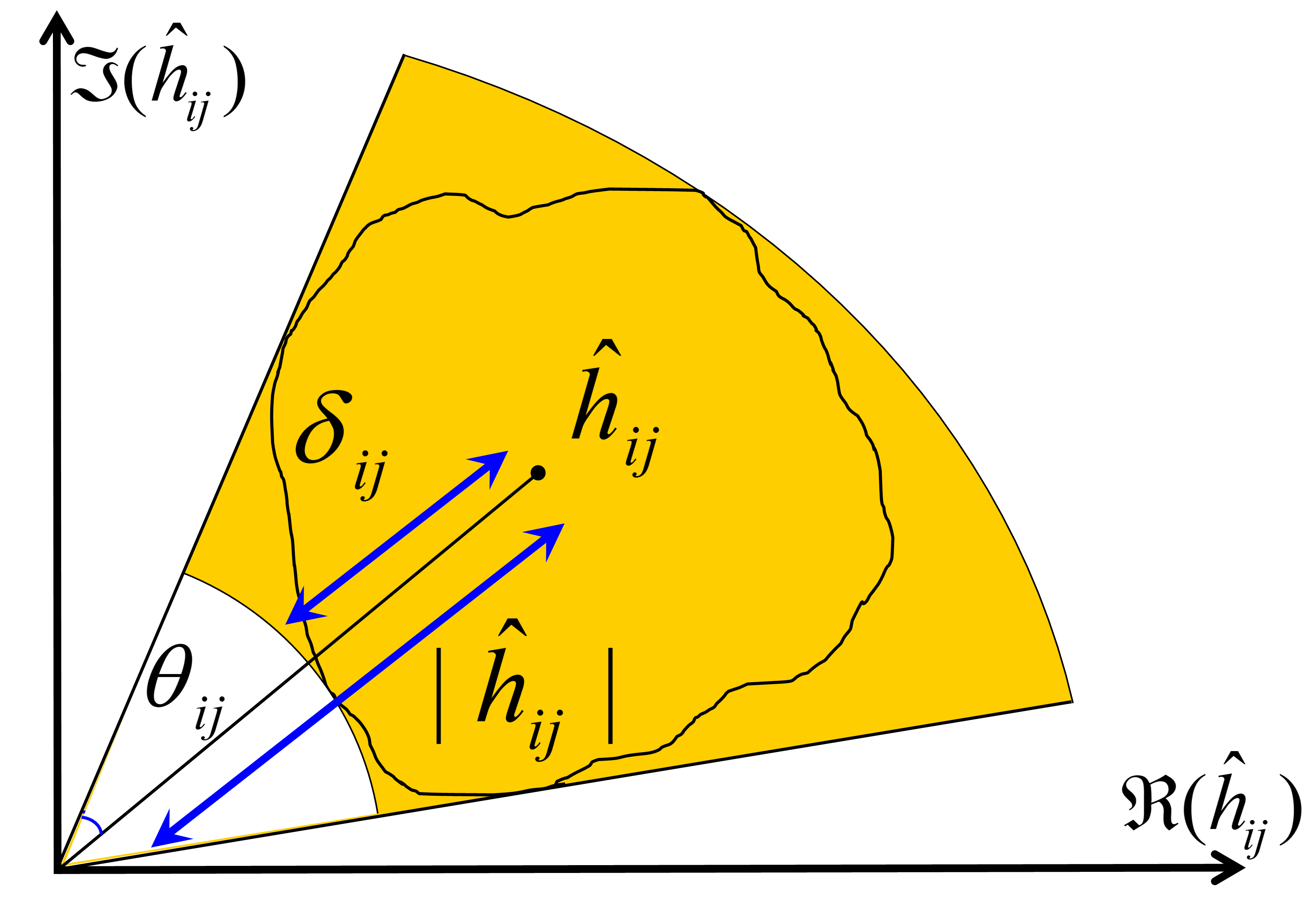}
\caption{Enlarged channel uncertainty region for the uncertainty region in Fig \ref{Fig2-1}.}
\label{Fig--2}
\end{figure}

In the Z-IC, since user 1 does not interfere with user 2, the optimal design parameters for user 1 are those maximizing its rate.
 Thus, every point in the Pareto boundary of the 
 robust rate region can be achieved when user 1 transmits with maximum power, i.e., $p_1=P_1$.  
Hence, in the rest of this section, we consider $p_1=P_1$. 
The rate of user 2 in the Z-IC can be derived by taking $h_{12}=0$ in \eqref{R1} as
\begin{equation}\label{e15}
R_2=\frac{1}{2}\log_2\left(\frac{(p_2|h_{22}|^2+\sigma^2)^2-(\kappa_2p_2|h_{22}|^2)^2}{\sigma^4}\right).
\end{equation}
Moreover, the robust rate region of the Z-IC for the enlarged uncertainty region can be derived by replacing $\mathcal{E}_{ij}$ with $\tilde{\mathcal{E}}_{ij}$ 
in \eqref{rateregion} as
\begin{equation}
\tilde{\mathcal{R}}= \underset{\varrho\in\Omega}{\bigcup}(\underset{h_{ij}\in\tilde{\mathcal{E}}_{ij}}{\min}R_1,\underset{h_{ij}\in\tilde{\mathcal{E}}_{ij}}{\min}R_2),\label{rr-enl}
\end{equation}
where $\tilde{\mathcal{R}}\subset\mathcal{R}$.
In the following, we first derive the worst-case channel coefficients in the enlarged uncertainty region, which are the solution of the minimization part in \eqref{rr-enl}. 
Then, we will derive the transmission parameters that attain the boundary of the robust rate region for the enlarged uncertainty region, which is the solution of the optimization problem 
\begin{equation}
 \underset{\Delta_{\phi},0\leq p_2\leq P_2,0\leq\kappa_1,\kappa_2\leq 1}{\text{maximize}}\,\,\,\,\,\,\,\,  
  R_{2}^w, \,\,\,\,\,\,\,\,\,\,\,\,\,\,\,\,
   \text{s.t.}  \,\,\,\,\,\,\,\,   
  R_{1}^w\geq\alpha R_{1,\max}^w,
\label{rr-en}
\end{equation}
\!for a fixed $\alpha\in [0,1]$. The robust rate region can be derived by varying  $\alpha\in [0,1]$.

The worst-case gains are the same as those derived in Section \ref{IX-ch}, i.e., the minimum channel gain in the uncertainty region $\tilde{\mathcal{E}}_{ij}$ for the direct links, and the maximum channel gain in $\tilde{\mathcal{E}}_{ij}$ for the interference links. 
We rewrite them here as 
$|\tilde{h}_{11}|^{2}=\underset{x\in\mathcal{E}_{11}}{\min}\,\,|x|^2$, $|\tilde{h}_{22}|^{2}=\underset{x\in\mathcal{E}_{22}}{\min}\,\,|x|^2$, $|\tilde{h}_{21}|^{2}=\underset{x\in\mathcal{E}_{21}}{\max}\,\,|x|^2$, and $|\tilde{h}_{12}|^{2}=\underset{x\in\mathcal{E}_{12}}{\max}\,\,|x|^2$.

In the following, we derive the worst-case phase error along with the Pareto-optimal phases $\phi_1$ and $\phi_2$ for the given worst-case channel gains. Through (\ref{R1}), it can be observed that $R_1$ depends on these quantities only through the term 
\begin{equation}
A\triangleq\left| p_1\kappa_1|\tilde{h}_{11}|^2e^{j(2\measuredangle h_{11}-2\measuredangle h_{21}+\phi_1-\phi_2)}+p_2\kappa_2|\tilde{h}_{21}|^2\right|^2,
\end{equation}
while the rate of user 2 is independent of them (see \eqref{e15}).   
Replacing the true values by the estimated values, we have $2\measuredangle h_{11}\!-\!2\measuredangle h_{21}\!=\!2\measuredangle \hat{h}_{11}\!-\!2\measuredangle \hat{h}_{21}\!\pm 2
e_{\measuredangle h_{11}}\pm 2
e_{\measuredangle h_{21}}$. Let us define
$\Delta_{ch}\triangleq 2
e_{\measuredangle h_{11}}+ 2
e_{\measuredangle h_{21}}\in [-\theta,\theta]$ as the aggregate uncertainty in phase, where $\theta=2\theta_{11}+2\theta_{21}$ is the magnitude of the maximum aggregate phase error. 
Note that  $\theta=\pi$ means that there is no reliable information about the phase of the channels.
Finally, the term $A$ can be written as 
\begin{multline}
A\!=\! p_2^2\kappa_2^2|\tilde{h}_{21}|^4\!+2p_1\kappa_1p_2\kappa_2|\tilde{h}_{21}|^2|\tilde{h}_{11}|^2\!\cos(\Delta_{ch}+\Delta_{\phi^{'}}\!)
\\+p_1^2\kappa_1^2|\tilde{h}_{11}|^4,\label{eq:A}
\end{multline}
where $\Delta_{\phi^{'}}=(\phi_1+2\measuredangle \hat{h}_{11})-(\phi_2+2\measuredangle \hat{h}_{21})$.
In the following lemma, we state the Pareto-optimal phase parameters.

\begin{lemma}
In the Z-IC, each point of the boundary of the rate region defined in \eqref{rr-enl} can be achieved by $\phi_1=0$ and $\phi_2=2\measuredangle \hat{h}_{11}-2\measuredangle \hat{h}_{12}+\pi$. Furthermore, the corresponding worst-case channel phases yield $\Delta_{ch}^{\star}=\theta$.
\end{lemma}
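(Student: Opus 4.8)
The plan is to show that the minimization over the phase error and the maximization over the transmitter phase parameters can be carried out in closed form by examining how the rate $R_1$ depends on them through the single scalar $A$ in \eqref{eq:A}. The key observation is that $R_1$ in \eqref{R1} is a strictly decreasing function of $A$: inspecting \eqref{R1} with $h_{12}=0$, the term $A$ enters only through $-|\cdots|^2$ in the numerator of the log argument (with a positive denominator that does not depend on $\phi_1,\phi_2$), so a larger $A$ means a smaller $R_1$. Meanwhile $R_2$ in \eqref{e15} does not involve $\phi_1,\phi_2,\Delta_{ch}$ at all. Therefore the whole problem \eqref{rr-en} — both the inner worst-case minimization and the outer maximization of $R_2^w$ subject to $R_1^w\geq\alpha R_{1,\max}^w$ — reduces, as far as the phase variables are concerned, to choosing $\Delta_{\phi'}$ to \emph{minimize} $\max_{\Delta_{ch}\in[-\theta,\theta]} A$, i.e. to make the worst-case value of $A$ as small as possible.

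**The core minimax step.** From \eqref{eq:A}, $A = a + b\cos(\Delta_{ch}+\Delta_{\phi'})$ where $a = p_2^2\kappa_2^2|\tilde h_{21}|^4 + p_1^2\kappa_1^2|\tilde h_{11}|^4 \ge 0$ is a constant and $b = 2p_1\kappa_1 p_2\kappa_2|\tilde h_{21}|^2|\tilde h_{11}|^2 \ge 0$. For any fixed $\Delta_{\phi'}$, the adversary picks $\Delta_{ch}\in[-\theta,\theta]$ to maximize $\cos(\Delta_{ch}+\Delta_{\phi'})$, hence to push the argument $\Delta_{ch}+\Delta_{\phi'}$ as close to $0\ (\mathrm{mod}\ 2\pi)$ as the interval allows. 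The transmitter's best response is to choose $\Delta_{\phi'}$ so that even the adversary's best choice keeps $\cos(\Delta_{ch}+\Delta_{\phi'})$ as negative as possible; since $\Delta_{ch}$ ranges over an interval of half-width $\theta\le\pi$, setting $\Delta_{\phi'}=\pi$ forces $\Delta_{ch}+\Delta_{\phi'}\in[\pi-\theta,\pi+\theta]$, on which $\cos$ attains its maximum $\cos(\pi-\theta)=-\cos\theta$ at the endpoints $\Delta_{ch}=\pm\theta$. One checks that no other choice of $\Delta_{\phi'}$ does better: any shift of the center away from $\pi$ moves one endpoint of the image interval strictly closer to $0\ (\mathrm{mod}\ 2\pi)$, strictly increasing the max of $\cos$. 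Hence the saddle point is $\Delta_{\phi'}^\star=\pi$ with worst-case error $\Delta_{ch}^\star=\theta$ (equivalently $-\theta$), giving $A^\star = a - b\cos\theta$, the smallest achievable worst-case $A$.

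**Translating back to $\phi_1,\phi_2$.** Recall $\Delta_{\phi'} = (\phi_1+2\measuredangle\hat h_{11})-(\phi_2+2\measuredangle\hat h_{21})$. The condition $\Delta_{\phi'}=\pi$ is one equation in the two free phases $\phi_1,\phi_2$, so there is a one-parameter family of optimizers; I would simply exhibit the representative in the statement, $\phi_1=0$ and $\phi_2 = 2\measuredangle\hat h_{11} - 2\measuredangle\hat h_{12} + \pi$, and verify it satisfies $\Delta_{\phi'}=\pi$. (Here one must be slightly careful with the index: in the Z-IC $h_{12}=0$ while $h_{21}$ is the interfering link, and the lemma statement writes $\hat h_{12}$ where the derivation used $\hat h_{21}$; I would reconcile this by noting the paper's indexing convention — the interference channel into receiver $1$ — and check the substitution $\phi_1+2\measuredangle\hat h_{11} - (\phi_2 + 2\measuredangle\hat h_{21}) = 0 + 2\measuredangle\hat h_{11} - (2\measuredangle\hat h_{11}-2\measuredangle\hat h_{12}+\pi + 2\measuredangle\hat h_{21}) = -\pi \equiv \pi \pmod{2\pi}$ provided $\measuredangle\hat h_{12}$ is read as $\measuredangle\hat h_{21}$.) Finally, since this choice minimizes the worst-case $A$ and $R_1$ is strictly decreasing in $A$ while $R_2$ is independent of the phases, these phase parameters \emph{simultaneously} maximize $R_1^w$ (hence make the constraint in \eqref{rr-en} easiest to satisfy) for every admissible choice of the remaining variables $p_2,\kappa_1,\kappa_2$; therefore they are optimal for the full problem and every boundary point of $\tilde{\mathcal R}$ is attained with them.

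**Main obstacle.** The only real subtlety is the minimax argument over $\Delta_{ch}$ and $\Delta_{\phi'}$ — specifically, justifying rigorously that centering the admissible phase interval at $\pi$ is optimal when $\theta$ can be anything in $[0,\pi]$, including the degenerate cases $b=0$ (no improper signaling, $A$ constant, any $\phi_i$ works) and $\theta=\pi$ (the adversary can always reach $\cos=1$, so $A^\star=a+b$ regardless and the claimed $\Delta_{ch}^\star=\theta$ still holds trivially). I would handle this by a short, explicit one-variable analysis of $\theta\mapsto\max_{|\Delta_{ch}|\le\theta}\cos(\Delta_{ch}+c)$ as a function of the center $c$. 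Everything else is bookkeeping: the monotonicity of $R_1$ in $A$ and the phase-independence of $R_2$ are immediate from \eqref{R1}–\eqref{e15}.
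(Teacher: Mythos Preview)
Your proposal is correct and follows essentially the same approach as the paper: both reduce the phase optimization to the minimax $\min_{\Delta_{\phi'}}\max_{|\Delta_{ch}|\le\theta}\cos(\Delta_{\phi'}+\Delta_{ch})$ via the monotonicity of $R_1$ in $A$ and the phase-independence of $R_2$, solve it to obtain $\Delta_{\phi'}^\star=\pi$ and $\Delta_{ch}^\star=\theta$, and then use the remaining degree of freedom to fix $\phi_1=0$. You also correctly flag the index typo in the lemma statement ($\hat h_{12}$ should read $\hat h_{21}$), which the paper's own proof silently corrects.
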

\begin{proof}
Since $\Delta_{ch}$ and $\Delta_{\phi^{'}}$ only affect $R_1$, their values describing the boundary of the rate region defined in \eqref{rr-enl} can be obtained as the solution of the maximin problem
\begin{align}
\nonumber (\Delta_{\phi^{'}}^{\star},\Delta_{ch}^{\star})&=\arg\underset{\Delta_{\phi^{'}}}{\max} \,
\underset{\Delta_{ch}}{\min}(R_1)\overset{\mathrm{(*)}}{=} \arg\underset{\Delta_{\phi^{'}}}{\min} \,
\underset{\Delta_{ch}}{\max}(A)
\\&=
\arg\underset{\Delta_{\phi^{'}}}{\min} \,
\underset{\Delta_{ch}}{\max}(\cos(\Delta_{\phi^{'}}+\Delta_{ch})).
\label{phi}
\end{align}
\begin{figure}[t]
\centering
\includegraphics{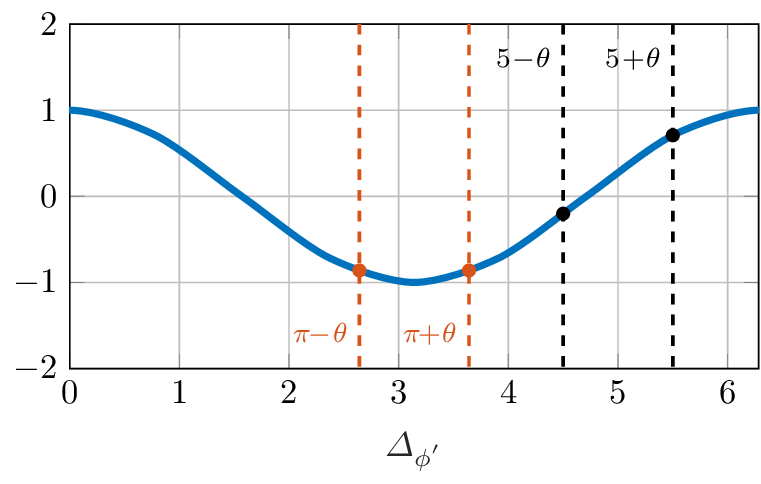}
\caption{Function $\cos(\Delta_{\phi^{'}}+\Delta_{ch})$ for two examples with $\theta=0.5$, $\Delta_{\phi^{'}}=\pi$ (red) and $\Delta_{\phi^{'}}= 5$ (black). }
\label{cos}
\end{figure}
The equality $(*)$ in \eqref{phi} holds since $R_1$ is a function of the phases only through $A$. 
Moreover, $R_1$ decreases with $A$.
In order to solve this problem, we consider $\Delta_{\phi^{'}}$ in a period, i.e., $0\leq\Delta_{\phi^{'}}< 2\pi$, as depicted in Fig \ref{cos}.
Thus, the solution of the maximization problem in \eqref{phi} is 
\begin{equation}
\Delta_{ch}^{\star}=\left\{ \begin{array}{rcl}
\max(-\Delta_{\phi^{'}},-\theta) & \mbox{for}
& 0\leq\Delta_{\phi^{'}}< \pi, \\ 
\min(2\pi-\Delta_{\phi^{'}},\theta) & \mbox{for}
& \pi\leq\Delta_{\phi^{'}}< 2\pi.
\end{array}\right.
\end{equation}
 This is because $\cos(\phi)$ is decreasing in the interval $[0,\pi]$, increasing in the interval $[\pi,2\pi ]$ and maximized at $\phi=0$, or $\phi=2\pi$. 
Moreover, since $\cos(\phi)$ is decreasing (increasing) in the interval $[0,\pi]$ ($[\pi,2\pi]$),  it can be easily seen through Fig. \ref{cos} that $\Delta_{\phi^{'}}^{\star}$ should be such that $\Delta_{\phi^{'}}+\Delta_{ch}$ is as close as possible to $\pi$, which results in $\Delta_{\phi^{'}}=\pi$.
In other words, the solution of the minimax problem 
 is $\Delta_{\phi^{'}}^{\star}=\pi$ and $\Delta_{ch}^{\star}=\theta$. 
Through \eqref{e15} and \eqref{R1}, we can see that the rate of user 2 is independent of the phase parameters, and the rate of user 1 is only a function of the phase difference. Thus, we can, without loss of generality,  choose $\phi_1=0$ and $\phi_2=2\measuredangle \hat{h}_{11}-2\measuredangle \hat{h}_{21}+\pi$.
\end{proof}
\begin{figure*}
\setcounter{equation}{25}
\begin{align}
\label{new-eq-26}
R_1^w(p_2,\kappa_2)\!=&\!\left\{\!\!\begin{array}{lcr} \!\frac{1}{2}\log_2\!\left(\!1\!+\frac{2P_1|\tilde{h}_{11}|^2\left(\sigma^2+p_2|\tilde{h}_{21}|^2(1+\kappa_2\cos\theta)\right)}{p^2_2|\tilde{h}_{21}|^4(1-\kappa^2_2)+2\sigma^2p_2|\tilde{h}_{21}|^2+\sigma^4}\!\!\right)
&\text{if}&\kappa_1^{\star}=1,\\
\!\frac{1}{2}\log_2\!\left(\!\frac{\left(P_1|\tilde{h}_{11}|^2+p_2|\tilde{h}_{21}|^2+\sigma^2\right)^2-p_2^2|\tilde{h}_{21}|^4\kappa_2^2\sin^2\theta}{p^2_2|\tilde{h}_{21}|^4(1-\kappa^2_2)+2\sigma^2p_2|\tilde{h}_{21}|^2+\sigma^4}\!\right)&\text{if}&\kappa_1^{\star}<1, \end{array}\right.\\
\label{r11}
R_2^w(p_2,\kappa_2)\!=&\frac{1}{2}\log_2\!\left(p_2^2|\tilde{h}_{22}|^4\sigma^{-4}(1\!-\!\kappa_2^2)+2p_2|\tilde{h}_{22}|^2\sigma^{-2}\!+\!1\right).
\end{align}
\setcounter{equation}{28}
\begin{equation}\label{q2}
q(\kappa_2,\theta)\!=\!\!\left\{\!\!\begin{array}{c}\!\!\! \frac{P_1|\tilde{h}_{11}|^2-\sigma^2(\gamma_1(2\alpha)+1)+\sqrt{(P_1|\tilde{h}_{11}|^2-\sigma^2\gamma_1(2\alpha))^2-\eta\left(\sigma^4(\gamma_1(2\alpha)+1)-2P_1|\tilde{h}_{11}|^2\sigma^2-P_1^2|\tilde{h}_{11}|^4\right)}}{|\tilde{h}_{21}|^2\eta}\\\text{if}\hspace{1cm}\kappa_1\!<\!1,\\
\!\!\frac{P_1|\tilde{h}_{11}|^2(1\!+\kappa_2\cos\theta)\!-\sigma^2\gamma_1(2\alpha)+\sqrt{P_1^2|\tilde{h}_{11}|^4(1+\kappa_2\cos\theta)^2-2P_1|\tilde{h}_{11}|^2\gamma_1(2\alpha)\sigma^2\kappa_2(\cos\theta+\kappa_2)+\gamma_1^2(2\alpha)\kappa_2^2\sigma^4}}{|\tilde{h}_{21}|^2\gamma(2\alpha)(1-\kappa_2^2)}\\\text{if}\hspace{1cm}\kappa_1\!=\!1, \end{array}\right.
\end{equation}
\hrulefill
\end{figure*}

We now derive the optimal transmit powers and circularity coefficients for the worst-case channels, so that the boundary of the rate region \eqref{rr-enl} is attained. 
Since user 1 does not interfere with user 2, the optimal design parameters for user 1 maximize its rate.
 Thus, its circularity coefficient can be obtained as 
\setcounter{equation}{22}
\begin{equation}\label{opt}
\kappa_1^{\star}=\arg\,\underset{\kappa_1}{\max}\left(R_1^w\right)=\arg\,\underset{\kappa_1}{\min}\left(A^{\star}\right),
\end{equation}
where $A^{\star}$ is obtained by taking the worst-case channels in \eqref{eq:A}. Since $A^{*}$ is convex in $\kappa_1$, the solution of \eqref{opt} can be derived as
\begin{equation}
\frac{\partial A^{\star}}{\partial \kappa_1}=-2P_1p_2\kappa_2|\tilde{h}_{21}|^2|\tilde{h}_{11}|^2\cos\theta+2P_1^2\kappa_1|\tilde{h}_{11}|^4=0.
\end{equation}
Taking the feasible set of $\kappa_1$ into account, we obtain
\begin{equation}\label{k1}
\kappa_1^{\star}=\min\left(1,\left[\frac{p_2|\tilde{h}_{21}|^2}{P_1|\tilde{h}_{11}|^2}\kappa_2\cos\theta\right]^+\right).
\end{equation}
From this equation we can readily observe that, if the total uncertainty in phase is equal to or greater than $\frac{\pi}{2}$ (i.e., $\theta\geq\frac{\pi}{2}$), user 1 should transmit proper Gaussian signals. In such a case, the resulting problem becomes equivalent to the proposed robust algorithm in Section \ref{IX-ch}-A, in which only one of the users employs IGS. 
Thus, in the following, we assume that $\theta<\pi/2$ 
and
derive a condition for the optimality of IGS for user 2 in terms of $\theta$, as well as its transmission parameters.

Plugging into \eqref{R1} the optimal transmission parameters of user 1 yields the worst-case rates in \eqref{new-eq-26} and \eqref{r11}.
The Pareto-optimal boundary for the enlarged uncertainty region can then be obtained by rewriting \eqref{rr-en} as 
\setcounter{equation}{27}
\begin{subequations}
\begin{alignat}{4}
 \underset{ 0\leq p_2\leq P_2, 0\leq\kappa_2\leq 1}{\text{maximize}}\,\,\,\,\,\,\,\,  & 
  R^w_{2}(p_2,\kappa_2), \\
  \text{s.t.}  \,\,\,\,\,\,\,\, \,\,\,\,\,\,\,\,\,\,\,\,\,\,\,\,& 
 \label{pareto-28-d} R^w_{1}(p_2,\kappa_2)\geq\alpha R^w_{1,\max},
\end{alignat}
\label{pareto}
\end{subequations}
\!\!where $\alpha\in [0,1]$ is fixed and $R_{1,\max}^w$ is the maximum worst-case achievable rate for user 1, which is obtained with $p_2=0$ and PGS \cite{cover2012elements}. By varying $\alpha$ between 0 and 1, the solution of (\ref{pareto}) provides every point of the robust rate region boundary in \eqref{rr-enl} \cite{lameiro2017rate}.  
Unfortunately, the optimization problem (\ref{pareto}) is not convex due to constraint \eqref{pareto-28-d}. 
Furthermore, the results in \cite{lameiro2017rate} for perfect CSIT cannot be applied due to the phase error $\theta$.
In the following lemma, we rewrite constraint \eqref{pareto-28-d} in a more convenient form to simplify the optimization problem.

\begin{lemma}\label{th:lemma1}
The constraint \eqref{pareto-28-d} is simplified to $p_2\leq q(\kappa_2,\theta)$, where $q(\kappa_2,\theta)$ is given by \eqref{q2}, 
where $\eta=\left((\gamma(2\alpha)+1)(1-\kappa^2_2)-1+\kappa^2_2\sin^2\theta\right)$ and $\gamma_1(x)\triangleq2^{x R^w_{1,\max}}-1$. 
\end{lemma}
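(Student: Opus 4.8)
The plan is to exploit the fact that, for a fixed circularity coefficient $\kappa_2$, the worst-case rate $R_1^w(p_2,\kappa_2)$ is continuous and non-increasing in $p_2$: by the previous lemma the worst-case channels are independent of the design parameters, and $R_1^w$ is the pointwise maximum over $\kappa_1\in[0,1]$ of the rate in \eqref{R1} evaluated at those channels, each term of which is continuous in $p_2$ and decreases as the interfering power $p_2|\tilde h_{21}|^2$ grows (since $\partial R_1/\partial|h_{21}|^2\leq 0$). Because $R_1^w(0,\kappa_2)=R_{1,\max}^w\geq\alpha R_{1,\max}^w$, the feasible set $\{p_2\geq 0:R_1^w(p_2,\kappa_2)\geq\alpha R_{1,\max}^w\}$ is therefore automatically an interval of the form $[0,q(\kappa_2,\theta)]$, where $q(\kappa_2,\theta)$ is the value at which the constraint becomes tight (with $q\geq P_2$ understood to mean the constraint is inactive). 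Thus the entire content of the lemma reduces to solving $R_1^w\bigl(q(\kappa_2,\theta),\kappa_2\bigr)=\alpha R_{1,\max}^w$ in closed form.

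To carry this out I would substitute the expression for $R_1^w$ from \eqref{new-eq-26}, which is piecewise according to whether $\kappa_1^\star=1$ or $\kappa_1^\star<1$ in \eqref{k1}; at $p_2=q$ the active branch is selected by comparing $q$ with the threshold $P_1|\tilde h_{11}|^2/(|\tilde h_{21}|^2\kappa_2\cos\theta)$, which is exactly why the final answer is piecewise (here $\theta<\pi/2$, as assumed after \eqref{k1}). In each branch, exponentiating the tightness equation with base $2$ and using $2^{2\alpha R_{1,\max}^w}=\gamma_1(2\alpha)+1$ converts it into a rational equation; clearing the strictly positive denominator $p_2^2|\tilde h_{21}|^4(1-\kappa_2^2)+2\sigma^2p_2|\tilde h_{21}|^2+\sigma^4$ leaves a quadratic in $p_2$. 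Collecting powers of $p_2$, the leading coefficient comes out, up to the common factor $|\tilde h_{21}|^4$, as $-\eta$ in the branch $\kappa_1^\star<1$ and as $\gamma_1(2\alpha)(1-\kappa_2^2)$ in the branch $\kappa_1^\star=1$, while the remaining coefficients are precisely the quantities appearing in and in front of the square roots in \eqref{q2}; the quadratic formula then delivers \eqref{q2}, with $q$ taken to be the larger of the two roots.

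I expect the main obstacle to be justifying \emph{which} root of each quadratic is the correct one, and discarding the other. One has to be careful because the leading coefficient is not sign-definite for all $(\kappa_2,\theta,\alpha)$, so the parabola can open either way; the clean bound $p_2\leq q$ is recovered by showing that in each branch the spurious root is non-positive, which hinges on $\alpha\leq 1$ — equivalently $\gamma_1(2\alpha)\leq\gamma_1(2)$ — being exactly the condition under which the constant term of the quadratic is non-negative, so that the two roots straddle the origin, together with the fact that the branch $\kappa_1^\star=1$ is active only for $p_2$ already exceeding the positive threshold above. Finally I would verify that the two branch expressions for $q(\kappa_2,\theta)$ coincide at that threshold, which is immediate from the continuity of $R_1^w$ in $p_2$ and ensures that the piecewise $q$ in \eqref{q2} is well defined; the rest — expanding squares, clearing denominators, and matching coefficients — is routine.
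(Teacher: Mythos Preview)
Your approach is correct and follows the same line as the paper's proof: substitute $R_1^w$ from \eqref{new-eq-26} into the constraint, clear the positive denominator to obtain the piecewise quadratic inequalities \eqref{q22}, and select the relevant root to arrive at \eqref{q2}. The paper's argument is considerably terser---it simply says ``take the positive root''---so your monotonicity justification for the interval structure and your root-selection discussion are elaborations that the paper leaves implicit rather than departures from it; note only that the sign convention you adopt (leading coefficient $-\eta$, constant term non-negative) is the reverse of the paper's \eqref{q22}, which is harmless since it corresponds to writing the inequality as $\geq 0$ instead of $\leq 0$.
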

\begin{proof}
Constraint \eqref{pareto-28-d} can be simplified to a quadratic function of $p_2$ by plugging \eqref{r11} into \eqref{pareto-28-d} as 
\setcounter{equation}{29}
\begin{equation}\label{q22}
\begin{array}{ccc} 
p_2^2|\tilde{h}_{21}|^4\eta\!-2P_1|\tilde{h}_{11}|^2\sigma^2\!-\!P_1^2|\tilde{h}_{11}|^4+\!\sigma^4(\gamma_1(2\alpha)\!+\!1)\\
-\!2p_2|\tilde{h}_{21}|^2\left[P_1|\tilde{h}_{11}|^2\!-\!\sigma^2\gamma_1(2\alpha)\right]\!\leq 0\hspace{.7cm}\text{if}\hspace{.7cm}\kappa_1\!<\!1,\\
p^2_2|\tilde{h}_{21}|^4(1\!-\!\kappa^2_2)\gamma_1(2\alpha)+\!\gamma_1(2\alpha)\sigma^4\!-\!2P_1\sigma^2|\tilde{h}_{11}|^2\\
+2p_2|\tilde{h}_{21}|^2\left[\sigma^2\gamma_1(2\alpha)-P_1|\tilde{h}_{11}|^2(1+\kappa_2\cos\theta)\right]\!

\!\leq \!0\\\hspace{.7cm}\text{if}\hspace{.7cm}\kappa_1\!=\!1. \end{array}
\end{equation}
We consider (\ref{q22}) as two quadratic polynomials in $p_2$ and take the positive root in (\ref{q22}), which results in (\ref{q2}).
\end{proof}

According to Lemma \ref{th:lemma1}, we can combine 
the power constraint and \eqref{pareto-28-d} into $0\leq p_2(\theta)\leq \min\{P_2,q(\kappa_2,\theta)\}$. Similar to \cite{lameiro2017rate}, we must set $p_2(\theta)=\min\{P_2,q(\kappa_2,\theta)\}$ in order to achieve the global optimum of (\ref{pareto}). 
As a result, the rate of user 2 is only a function of $\kappa_2$ as 
\begin{multline}
R^w_2(\kappa_2,\theta)=\frac{1}{2}\log_2\left(\frac{p_2^2(\kappa_2,\theta)|\tilde{h}_{22}|^4(1-\kappa_2^2)}{\sigma^{4}}\right.\\
\left.+\frac{2p_2(\kappa_2,\theta)|\tilde{h}_{22}|^2}{\sigma^{2}}+1\right).
\end{multline}
Finally, the optimization problem in (\ref{pareto}) is simplified to
\begin{equation}\label{opt2}
 \underset{0\leq\kappa_2\leq 1}{\text{maximize}}\,\,\,\,\,\,\,\,   
  R^w_{2}(\kappa_2,\theta).
\end{equation}
Notice that the difference between this problem and the one solved in \cite{lameiro2017rate} is in the error term $\theta$, which makes its solution not straightforward as the results in \cite{lameiro2017rate} are not applicable. 
Taking $\theta=0$ makes \eqref{opt2} equivalent to the problem considered in \cite{lameiro2017rate}.

\begin{theorem}\label{th:theorem1}
The optimal transmission strategy for user 2 is IGS if $P_2>q(0,\theta)$ and
\begin{equation}\label{k2}
\frac{|\tilde{h}_{21}|^2}{|\tilde{h}_{22}|^2}> \frac{\sigma^2\gamma_1(2\alpha)-P_1|\tilde{h}_{11}|^2-(\frac{P_1|\tilde{h}_{11}|^2}{\gamma_1(\alpha)}-\sigma^2)\cos^2\theta}{\sigma^2\left(\gamma_1(2\alpha)+\cos^2\theta\right)}.
\end{equation}
In this case, the optimal circularity coefficient of user 2 is
\begin{equation}\label{optk22}
\kappa=\left\{\begin{array}{ll}1&\text{if}\,\,q(1,\theta)\leq P_2,\\\kappa_{\max}&\text{otherwise},\end{array}\right.
\end{equation}
where $\kappa_{\max}$ is the minimum value of $\kappa_2$ that results in $P_2\leq q(\kappa_{\max},\theta)$ and $\gamma_1(x)\triangleq2^{x R^w_{1,\max}}-1$.
\end{theorem}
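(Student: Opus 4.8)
The plan is to reduce Theorem~\ref{th:theorem1} to a one-variable analysis of the objective of \eqref{opt2}. Write the argument of the logarithm in $R_2^w(\kappa_2,\theta)$ as $g(\kappa_2)\triangleq\big(1+u(\kappa_2)\big)^2-\kappa_2^2\,u(\kappa_2)^2$, where $u(\kappa_2)\triangleq p_2(\kappa_2,\theta)|\tilde{h}_{22}|^2/\sigma^2$ and $p_2(\kappa_2,\theta)=\min\{P_2,q(\kappa_2,\theta)\}$. Since $\log_2$ is increasing, maximizing $R_2^w$ over $\kappa_2\in[0,1]$ is equivalent to maximizing $g$, and PGS corresponds to $\kappa_2=0$; hence IGS is optimal precisely when $g$ attains its maximum over $[0,1]$ at some $\kappa_2>0$.

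First I would show that $q(\kappa_2,\theta)$ is non-decreasing in $\kappa_2$ on $[0,1]$ when $\theta<\pi/2$. This is done by differentiating the two branches of \eqref{q2}, keeping in mind that $\kappa_1^\star$ is coupled to $\kappa_2$ through \eqref{k1} and that $q$ is continuous at the crossover $\tfrac{p_2|\tilde{h}_{21}|^2}{P_1|\tilde{h}_{11}|^2}\kappa_2\cos\theta=1$; intuitively, a more improper interferer (with $\Delta_{ch}^\star=\theta$ and the phases fixed as above) lets user~1 satisfy $R_1^w\ge\alpha R_{1,\max}^w$ at a larger power $p_2$. Monotonicity of $q$ implies $p_2(\kappa_2,\theta)=q(\kappa_2,\theta)$ on $[0,\kappa_{\max})$ and $p_2(\kappa_2,\theta)=P_2$ on $[\kappa_{\max},1]$, where $\kappa_{\max}$ is the smallest $\kappa_2$ with $q(\kappa_2,\theta)\ge P_2$ (with the upper piece equal to $\{1\}$ exactly when $q(1,\theta)\le P_2$). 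The hypothesis $P_2>q(0,\theta)$ guarantees $\kappa_{\max}>0$, so that PGS lies in the $p_2=q$ regime; were it violated, $p_2\equiv P_2$ and $g(\kappa_2)=(1+v)^2-\kappa_2^2v^2$ with $v\triangleq P_2|\tilde{h}_{22}|^2/\sigma^2$ constant is strictly decreasing, so PGS would be optimal.

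Next I would analyze $g$ separately on the two regimes. On $[\kappa_{\max},1]$, $g$ is the strictly decreasing quadratic above, so its best value there is at $\kappa_2=\kappa_{\max}$. On $[0,\kappa_{\max}]$, where $p_2=q$, the key claim is that $g'$ does not change sign, i.e., $g$ is monotone with the sign of $g'$ equal to that of $g'(0^+)$; this is the technical core, established from the explicit form of $q$ in \eqref{q2} by differentiation and by handling the further split of this interval at the $\kappa_1=1$ crossover. Granting this, $g$ attains its maximum over $[0,1]$ at $\kappa_2=0$ if $g'(0^+)\le0$ (PGS), and at $\kappa_2=\kappa_{\max}$ — or at $\kappa_2=1$ when $q(1,\theta)\le P_2$ and no crossover exists — if $g'(0^+)>0$, which is exactly the case split in \eqref{optk22}.

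It remains to compute $\operatorname{sign}\big(g'(0^+)\big)$. Since the explicit $\kappa_2$-dependence of $g$ contributes $-2\kappa_2u^2$, which vanishes at $\kappa_2=0$, one has $g'(0^+)=2\big(1+u(0)\big)u'(0^+)$, so $\operatorname{sign}g'(0^+)=\operatorname{sign}q'(0^+)$. At $\kappa_2=0$, \eqref{k1} gives $\kappa_1^\star=0<1$, so near $\kappa_2=0$ the quantity $q$ is the first branch of \eqref{q2} with $\eta=\gamma_1(2\alpha)(1-\kappa_2^2)-\kappa_2^2\cos^2\theta$; differentiating at $\kappa_2=0$ and simplifying with the identities $P_1|\tilde{h}_{11}|^2=\sigma^2\gamma_1(1)$ and $\gamma_1(2\alpha)+1=(\gamma_1(\alpha)+1)^2$ reduces $q'(0^+)>0$ to precisely \eqref{k2}. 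Combined with the remark in the second paragraph that $P_2>q(0,\theta)$ is also necessary, this shows the stated condition is not only sufficient but necessary. I expect the main obstacle to be the sign analysis of $g'$ on the whole $p_2=q$ regime — ruling out interior optima — because it requires controlling the square-root expression in \eqref{q2} across the $\kappa_1$-branch boundary; the derivative computation in the last step is long but routine once the two identities above are used.
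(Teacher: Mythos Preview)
Your overall plan mirrors the paper's proof in Appendix~\ref{app:theorem1}: reduce \eqref{opt2} to a one-variable problem, establish that $q(\cdot,\theta)$ is increasing so that $p_2=\min\{P_2,q\}$ splits $[0,1]$ at $\kappa_{\max}$, note that on $[\kappa_{\max},1]$ the objective is trivially decreasing, and on $[0,\kappa_{\max}]$ determine the sign of the derivative of $R_2^w$ at the origin and then show this sign persists across the $\kappa_1$-branch crossover. That is exactly the paper's route.

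There is, however, a concrete gap in your last step. Near $\kappa_2=0$ the relevant branch of $q$ in \eqref{q2} depends on $\kappa_2$ only through $\eta=(\gamma_1(2\alpha)+1)(1-\kappa_2^2)-1+\kappa_2^2\sin^2\theta$, i.e.\ only through $\kappa_2^2$. Hence $q$ is an even function of $\kappa_2$ there, so $\left.\tfrac{dq}{d\kappa_2}\right|_{\kappa_2=0}=0$, which gives $u'(0^+)=0$ and therefore $g'(0^+)=0$ in your own chain-rule identity $g'(0^+)=2(1+u(0))u'(0^+)$. The first derivative in $\kappa_2$ carries no information at the origin and cannot yield \eqref{k2}. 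The paper avoids this by differentiating with respect to $\kappa_2^2$ (equivalently, by examining the leading second-order behavior): it shows $\left.\tfrac{\partial R_2^w}{\partial\kappa_2^2}\right|_{\kappa_2=0}>0$ is equivalent to \eqref{22}, and then evaluates \eqref{22} at $\kappa_2=0$ using $q(0,\theta)|\tilde{h}_{21}|^2=\tfrac{P_1|\tilde{h}_{11}|^2}{\gamma_1(\alpha)}-\sigma^2$ to obtain \eqref{k2}. Your computation must be redone in the variable $\kappa_2^2$ (or as a second derivative in $\kappa_2$) to reach the stated condition.

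A second, smaller issue: your ``key claim'' that $g'$ does not change sign on $[0,\kappa_{\max}]$ in \emph{both} directions, and your concluding sentence that the condition is also \emph{necessary}, go beyond what the theorem asserts and beyond what the paper proves. The paper establishes only the forward implication---positivity of $\tfrac{\partial R_2^w}{\partial\kappa_2^2}$ at $0$ propagates to all $\kappa_2>0$ (first on the $\kappa_1<1$ branch via \eqref{22} and the monotonicity of $q$, then on the $\kappa_1=1$ branch via \eqref{soli} evaluated at the crossover $\overline{\kappa}$)---and explicitly remarks after the theorem that \eqref{k2} is sufficient but may not be necessary. So drop the necessity claim; only sufficiency is required, and only the one-sided sign-propagation needs to be argued.
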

\begin{proof}
Please refer to Appendix \ref{app:theorem1}. 
\end{proof}
\begin{remark}
According to Theorem \ref{th:theorem1}, even if there is no reliable phase information, i.e., when $\theta$ is large, the optimal transmission strategy for user 2 may be IGS if the interference level is sufficiently high, i.e., if \eqref{k2} holds. 
\end{remark}
Notice that the condition presented in Theorem \ref{th:theorem1} is sufficient, but may not be necessary. As shown in Appendix \ref{app:theorem1}, condition \eqref{k2} is obtained by showing that $R_2^w(\kappa_2,\theta)$ is increasing at $\kappa_2=0$ if and only if this condition holds. 
However, IGS might provide a minor gain when  \eqref{k2} does not hold. 
As shown in \cite{lameiro2017rate}, there is only a small performance advantage of IGS in perfect CSI when $R_2^w(\kappa_2,\theta)$ is not increasing around zero, which vanishes as $\kappa_1$ approaches 0. 
Therefore, the solution presented in Theorem \ref{th:theorem1} provides an almost-optimal characterization of the Pareto boundary for the enlarged uncertainty region.

\section{Numerical Results}\label{secIV}

In this section we illustrate our findings with some numerical examples. 
Throughout this section, for the sake of illustration, 
we consider a proper complex Gaussian distribution for the aggregate CSI error similar to the models in \cite{pascual2006robust,wei2017optimal,chang2014radio}. 
Nevertheless, our proposed design can be applied to any uncertainty set, as mentioned before.
We consider the uncertainty region as the region where the true channel lies with probability $\psi$.
Thus, the uncertainty region for the SISO channel is a circle centered at $\hat{h}_{ij}$, as illustrated in Fig. \ref{Fig2}. The radius of this circle is 
\begin{equation}\label{radius}
\delta_{ij}=
\sqrt{-\sigma^2_{ij}\ln(1-\psi)},
\end{equation}where $\sigma^2_{ij}$ is the variance of the channel estimation error. 
Hence, the uncertainty set for channel $h_{ij}$ is $\mathcal{E}_{ij}\!=\!\{x\in\mathbb{C}\!:\! x\!=\!\hat{h}_{ij}+ e_{ij},  | e_{ij}|^2 \leq \delta_{ij}^2\}$. 
Moreover, the enlarged uncertainty region is 
 $\tilde{\mathcal{E}}_{ij}\!=\!\{x\in\mathbb{C}\!:|x|=|\hat{h}_{ij}|+e_{|h_{ij}|},\measuredangle x=\measuredangle \hat{h}_{ij}+ e_{\measuredangle h_{ij}}, |e_{|h_{ij}|}|\leq\delta_{ij},|e_{\measuredangle h_{ij}}|\leq\theta_{ij} \}$, where $\theta_{ij}=\arcsin\frac{\delta_{ij}}{|\hat{h}_{ij}|}$ as depicted in Fig. \ref{Fig2}. 
\begin{figure}[t]
\centering
\includegraphics[width=0.34\textwidth]{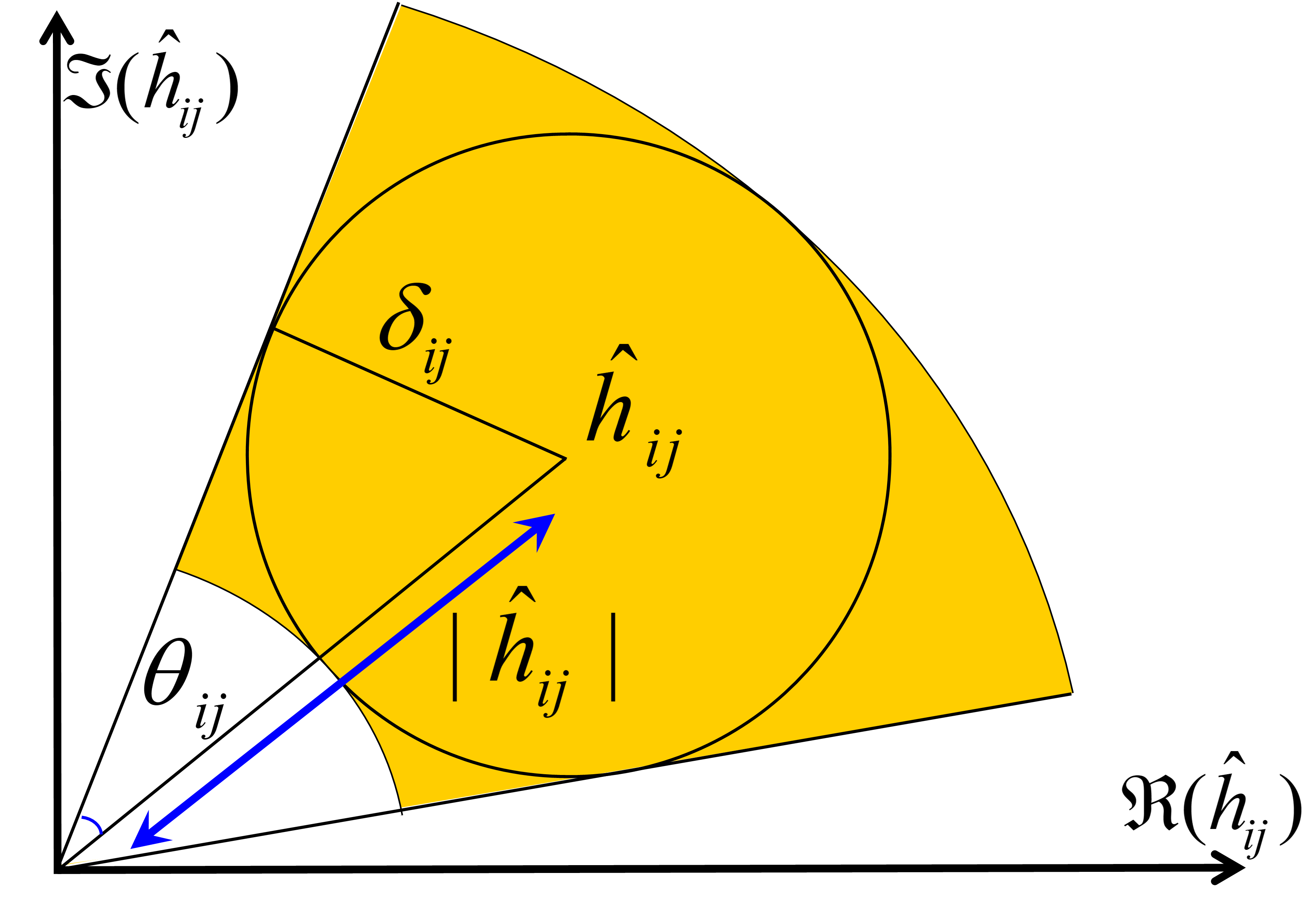}
\caption{Channel uncertainty and enlarged channel uncertainty regions when the CSI errors are modeled as proper Gaussian.}
\label{Fig2}
\end{figure}

In this section, we consider $\sigma^2=1$ and $P=P_1=P_2$, unless it is explicitly mentioned. 
We also assume 
the same variance of the estimation error for all links, i.e., $\sigma^2_{e_{ij}}=\sigma^2_{e}$ for $i,j\in\{1,2\}$ unless it is explicitly mentioned. 
That yields the same size of the uncertainty region for all channels, i.e., $\delta_{ij}=\delta$ for $i,j\in\{1,2\}$, where $\delta$ is given by (\ref{radius}). 
Furthermore, we compare our proposed robust designs with the existing non-robust designs \cite{lameiro2017rate} and  \cite{zeng2013transmit} for the Z-IC and two-user IC, respectively.
We choose the joint covariance and complementary covariance algorithm in \cite{zeng2013transmit} as a non-robust design for two-user IC. 
Moreover, the robust PGS is the PGS design for the worst-case channels gains.

In the following subsections we provide two different types of numerical examples. 
In the first type, we provide results averaged over a large number of channel realizations for a specific point of the rate region. 
For example, we derive the average sum-rate for a specific point of the rate region 
averaged  over 200 channel realizations in Figs. \ref{Fig80}, \ref{Fig9} and \ref{Fig10}. 
For this type of numerical examples, we employ the Monte Carlo method and generate the channels randomly. 
In each channel realization, each channel estimate is drawn from a proper complex Gaussian distribution with zero mean and unit variance.
In the second type, we derive the whole rate region for a specific channel realization.
It is worth mentioning that both types of numerical examples are common for rate region analysis (please refer to \cite{zeng2013transmit, lameiro2017rate,lagen2016superiority}).

\subsection{Results for the two-user IC}
In this subsection, we present the results for the two-user IC. 
The proposed and existing techniques are denoted as follows. 
\begin{itemize}
\item {\bf R-IGS}: Our proposed robust IGS in Section \ref{IX-ch}.
\item {\bf R-PGS}: Robust PGS.
\item {\bf N-IGS}: Non-robust IGS, which is derived by our proposed scheme for the two-user IC without considering errors in CSI. 
\item {\bf 2-IGS}: The joint variance and covariance IGS algorithm in \cite{zeng2013transmit}.
\item {\bf TS-I}: Robust IGS with time sharing.
\item {\bf TS-P}: Robust PGS with time sharing.
\end{itemize}

\begin{figure}[t]
\centering
\includegraphics{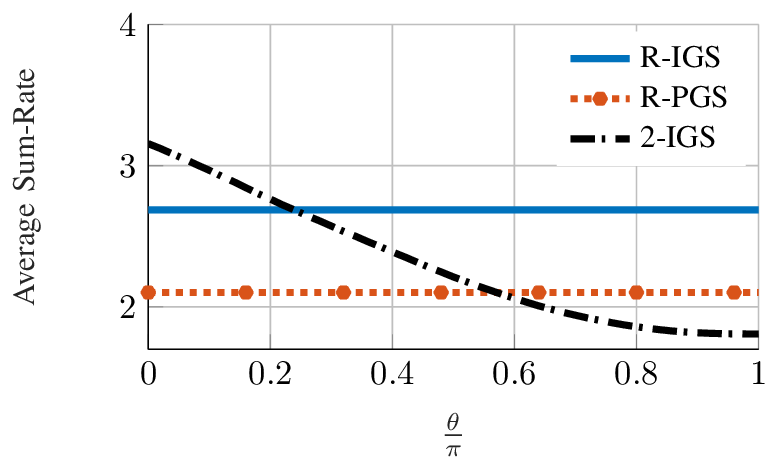}
\caption{Average sum-rate of the two-user IC for 
SNR$=10$dB versus the phase error.}
\label{Fig80}
\end{figure}
\begin{figure*}[t!]
    \centering
    \begin{subfigure}[t]{0.5\textwidth}
        \centering
        \includegraphics{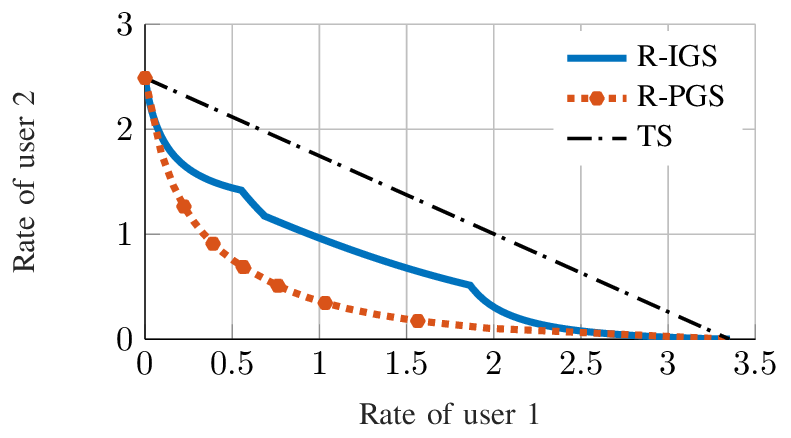}
        \caption{SNR$=10\,$dB.}
    \end{subfigure}%
    ~ 
    \begin{subfigure}[t]{0.5\textwidth}
        \centering
        \includegraphics{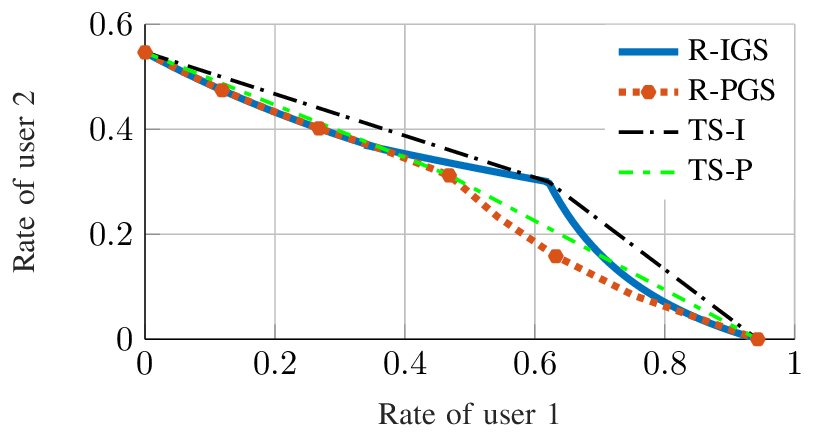}
        \caption{SNR$=0\,$dB.}
    \end{subfigure}
    \caption{Robust rate region of the two-user IC for  $\delta_{ij}=0.5$ for $i,j\in\{1,2\}$ and $i\neq j$.}
	\label{Fig40}
\end{figure*}
\begin{figure*}[t!]
    \centering
    \begin{subfigure}[t]{0.5\textwidth}
        \centering
        \includegraphics[width=\textwidth]{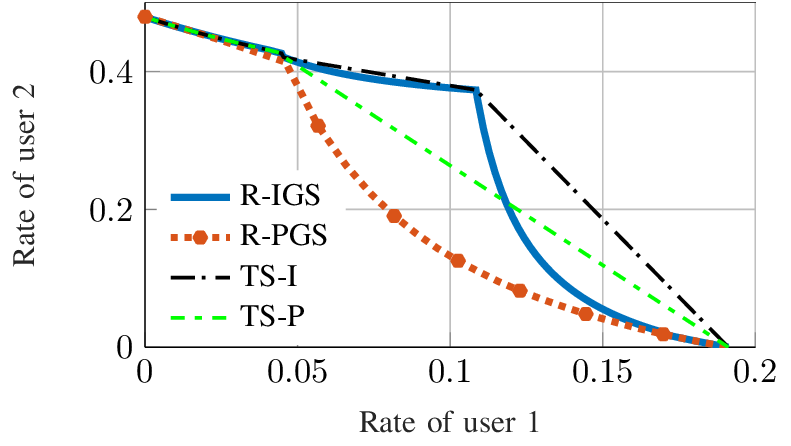}
        \caption{SNR$=10\,$dB.}
    \end{subfigure}%
    ~ 
    \begin{subfigure}[t]{0.5\textwidth}
        \centering
        \includegraphics[width=\textwidth]{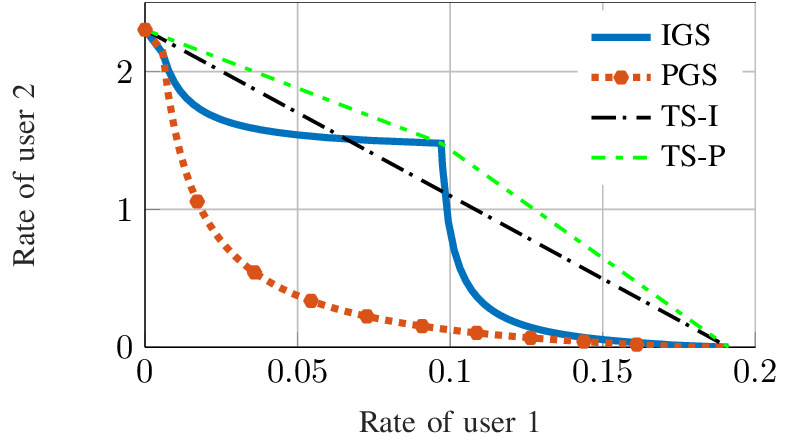}
        \caption{$\text{SNR}_1=0\,$dB, $\text{SNR}_2=10\,$dB.}
    \end{subfigure}
    \caption{Robust rate region for the two-user IC and channel realization $\hat{\mathbf{H}}_2$.}
	\label{Fig4000}
\end{figure*}
In order to evaluate the effect of phase error on the transmitter side, we assume perfect CSI for the channel gains in Fig. \ref{Fig80}.
This figure shows the average sum-rate for 
signal-to-noise-ratio (SNR), i.e., $\frac{P}{\sigma^2}$, equal to $10\,$dB versus the accuracy of the phase information. 
We consider a specific point in the rate region, which is given by the fairness point in the algorithm in \cite{zeng2013transmit}. In order to provide a fair comparison, we first derive the rates by the algorithm in \cite{zeng2013transmit} and then fix the rate of user 1 in our design to the worst-case rate of user 1 achieved by the algorithm in \cite{zeng2013transmit}. We average the results over $200$ channel realizations, and 
consider the worst-case performance of the average sum-rate for a bounded error in the available phase information. 
The considered error is the aggregated error in phase information for both interference and direct links, i.e., $\theta_i=\theta_{ii}+\theta_{ji}$ for $i,j\in\{1,2\}$, where $\theta_{1}=\theta_{2}=\theta$ is the horizontal axis of Fig. \ref{Fig80}. 
The worst-case phase is the phase that minimizes the achievable rate of users. 
As can be observed, our proposed algorithm is suboptimal when perfect CSI is available, compared to the non-robust algorithm proposed in \cite{zeng2013transmit}. 
However, it outperforms this algorithm when the aggregated phase error increases. 
For this example, our robust algorithm performs better than the algorithm in \cite{zeng2013transmit} when $\theta\geq 42^{\circ}$. 
 Our robust proposed algorithm for the two-user IC and robust PGS are both  independent of phase information, hence their performance is independent of phase information. 
 However, the performance of the non-robust algorithm highly depends on the accuracy of the phase information. 
 Thus, our proposed algorithm outperforms existing methods when the error in the phase of the channels is high. 

In Fig. \ref{Fig40}, we show the achievable robust rate region of the two-user IC for SNRs of 10 dB and 0 dB, and estimated channel
\begin{equation}
\hat{\mathbf{H}}_1=\left[ \begin{array}{cc}
1.4833e^{-i2.5864} & 0.6375e^{-i1.3064} \\ 0.9367e^{-i3.0001 }  & 0.4118e^{i2.1234} \end{array} \right],
\end{equation} 
where $[\hat{\mathbf{H}}]_{ij}=\hat{h}_{ij}$. 
In this figure, we consider a scenario with perfect CSI for the direct links but imperfect CSI for the interference links. 
This may represent a scenario where less resources are devoted to acquiring the CSI of the interfering links, which is relevant because interfering signals are treated as noise at the receivers. 
 In this figure, it can be observed that IGS can enlarge the robust rate region in both high and low SNR regimes.
Moreover, the performance improvement of IGS in the high SNR regime is higher than in the low SNR regime for the same channels. For this channel realization, IGS with time sharing (TS)\footnote{The achievable rate region with TS is derived by taking the convex-hull operation over the corresponding achievable rate region of each design \cite{zeng2013transmit, lameiro2017rate}. 
Note that TS yields the convex hull when the power constraint is satisfied in each operating point. The rate region may be enlarged by constraining the average transmit power over the different operation points instead. It is shown in \cite{hellings2017improper, hellings2018improper} for {\em perfect CSI} that IGS with TS does not provide any gain over PGS with TS in the Z-IC and two-user IC. Repeating this analysis for the imperfect CSI case is by no means straightforward and falls outside the scope of the paper.} performs the same as PGS with TS  at $10\,$dB SNR, indicated by TS in  Fig. \ref{Fig40}a. 
However, at $0\,$dB SNR, IGS with TS outperforms PGS with TS in Fig. \ref{Fig40}b.

\begin{figure*}[t!]
    \centering
    \begin{subfigure}[t]{0.5\textwidth}
        \centering
        \includegraphics[width=\textwidth]{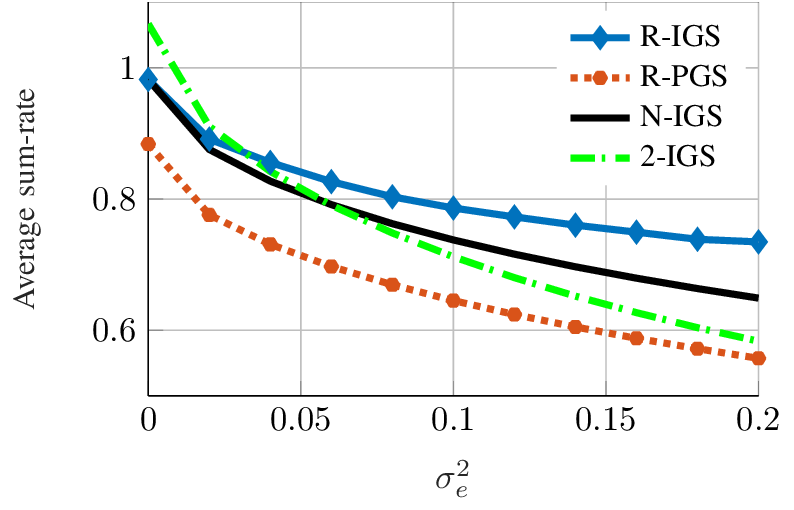}
        \caption{SNR$=0\,$dB.}
    \end{subfigure}%
    ~ 
    \begin{subfigure}[t]{0.5\textwidth}
        \centering
        \includegraphics[width=\textwidth]{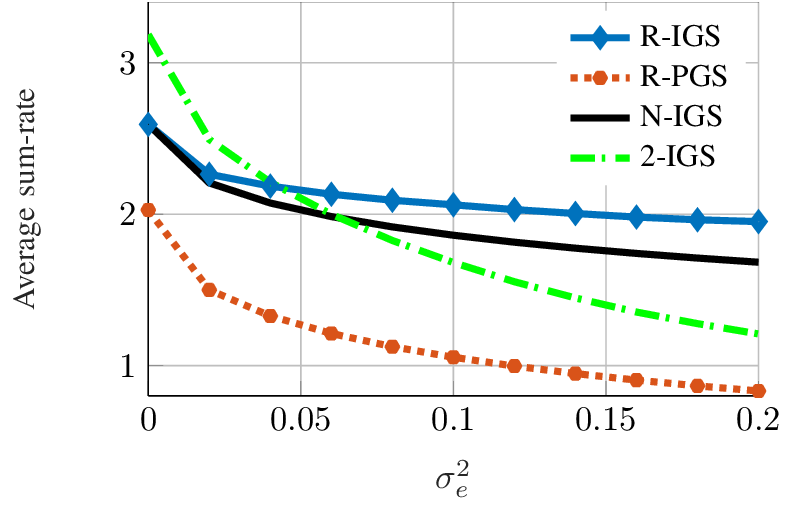}
        \caption{SNR$=10\,$dB.}
    \end{subfigure}
    \caption{Average sum-rate of the two-user IC 
    when only the CSI of the interference links is imperfect.}
	\label{Fig9}
\end{figure*}

In Fig. \ref{Fig4000}, we show the achievable rate region of the two-user IC for perfect CSI, and channel 
 \begin{equation}
\hat{\mathbf{H}}_2=\left[ \begin{array}{cc}
0.3764 e^{i1.4381} & 0.4029e^{i0.9486} \\ 1.8542e^{i2.8153}  & 0.6277e^{i2.3697} \end{array} \right].
\end{equation}
We consider SNR of 10 dB for Fig. \ref{Fig4000}a, and $\text{SNR}_1=0\,$dB, $\text{SNR}_2=10\,$dB for Fig. \ref{Fig4000}b, where $\text{SNR}_i=\frac{P_i}{\sigma^2}$. 
As can be observed, IGS significantly enlarges the achievable rate region for $\hat{\mathbf{H}}_2$. 

In Figs. \ref{Fig9} and \ref{Fig10}, we consider the average sum-rate of the two-user IC for SNRs of $0\,$dB and $10\,$dB. 
In these figures, we consider the same point of the rate region described in Fig. \ref{Fig80} and average the results over 200 channel realizations. We 
set $\Pr=95\%$ (which implies a maximum outage probability of 5\%) and vary the variance of the estimation error, $\sigma^2_e$. 
Moreover, the radius of the uncertainty region is then obtained as \eqref{radius}.

In Fig. \ref{Fig9}, we consider imperfect CSI only for the interference links, where $\delta_{11}=\delta_{22}=0$ and $\delta_{21}=\delta_{12}$. As can be observed in Fig. \ref{Fig9}, the IGS algorithms are always better than PGS.
Since our proposed algorithm for the two-user IC is suboptimal for perfect CSI, the non-robust algorithm in \cite{zeng2013transmit} performs better than our robust design. 
 Specifically, it achieves $18\%$ higher sum-rate for SNR$=10\,$dB and $8\%$ for SNR$=0\,$dB. 
However, as the error in the interference link increases, our proposed algorithm performs better than the non-robust algorithm, where there is a $61\%$ and $25\%$ improvement in achievable sum-rate for $\sigma^2_e=0.2$ when SNR$=10\,$dB and SNR$=0\,$dB, respectively.

\begin{figure*}[t!]
    \centering
    \begin{subfigure}[t]{0.5\textwidth}
        \centering
       \includegraphics[width=\textwidth]{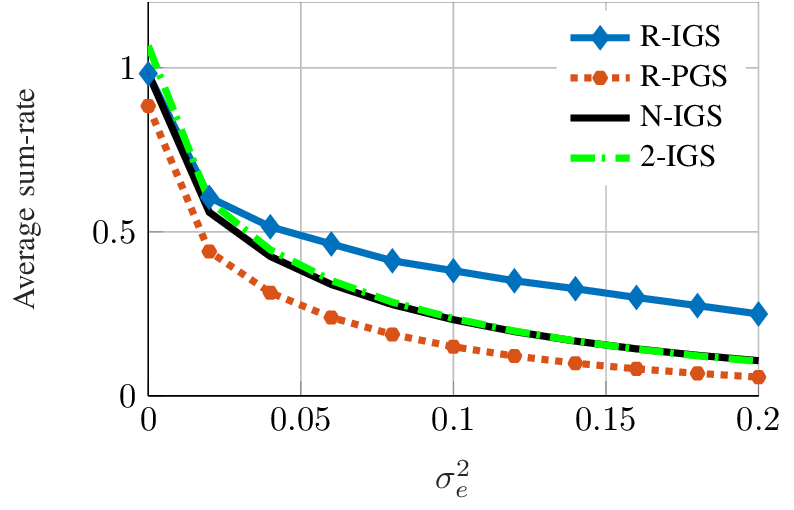}
        \caption{SNR$=0\,$dB.}
    \end{subfigure}%
    ~ 
    \begin{subfigure}[t]{0.5\textwidth}
        \centering
        \includegraphics[width=\textwidth]{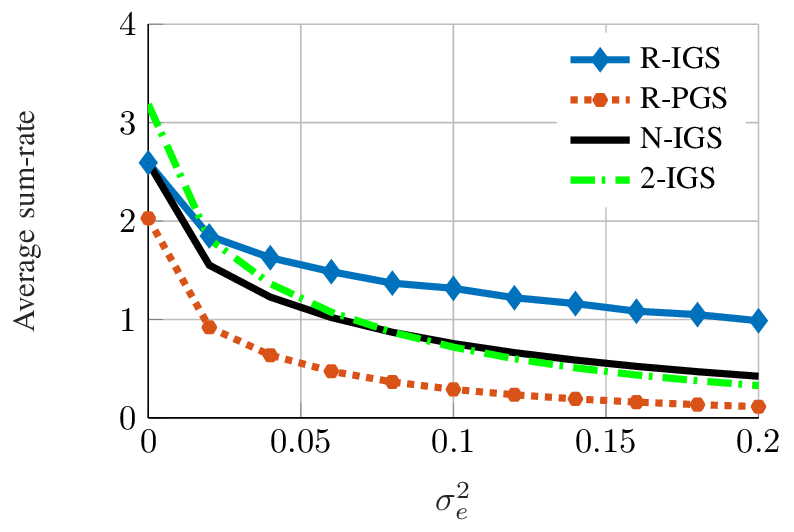}
        \caption{SNR$=10\,$dB.}
    \end{subfigure}
    \caption{Average sum-rate of the two-user IC 
    when the CSI of the direct and interference link is imperfect.}
	\label{Fig10}
\end{figure*}
In Fig. \ref{Fig10}, we consider errors in all links with the same size of the uncertainty region, i.e., $\delta_{ij}=\delta$ for $i,j\in\{1,2\}$. 
We observe that 
the IGS algorithms always perform better than PGS.
As also observed in Fig. \ref{Fig9},
our algorithm performs better than the non-robust algorithm as the error in all links increases. 
For this example, there is a $200\%$ and $137\%$ improvement in average sum-rate when $\sigma^2_e=0.2$ for SNR$=10\,$dB and SNR$=0\,$dB, respectively.

These figures show that our proposed algorithm is robust against imperfect CSI and provides a considerable gain in sum-rate compared to a non-robust approach in high estimation errors. 
This is a rather surprising result, which means that IGS is even more robust to imperfect CSI than its proper counterpart when the transmission parameters are optimized in a robust way.

\subsection{Results for the Z-IC}
\begin{figure}[t]
\centering
\includegraphics{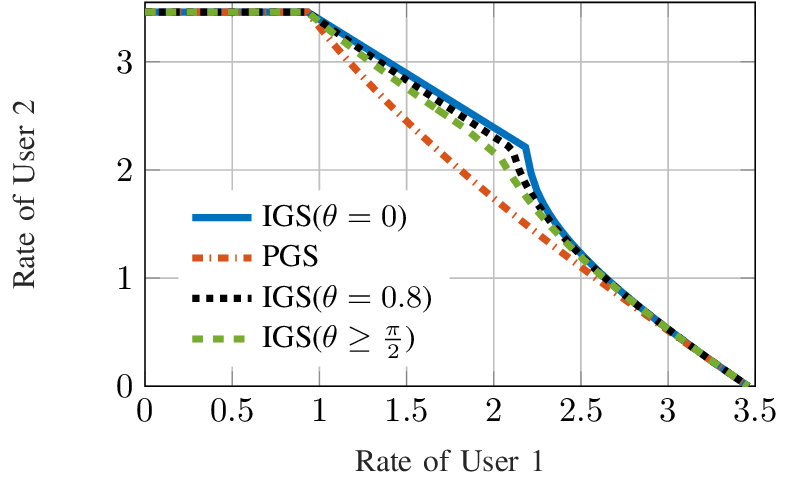}
\caption{Achievable rate region boundaries of the Z-IC for different phase uncertainties and SNR=$10$dB.}
\label{Fig3}
\end{figure}
In this subsection, we evaluate the robust design for the Z-IC. 
In the figures of this subsection, we use the following acronyms:
\begin{itemize}
\item {\bf R-IGS}: Our proposed robust IGS in Section \ref{secIII}.
\item {\bf R-PGS}: Robust PGS.
\item {\bf N-IGS}: The non-robust IGS optimal solution in \cite{lameiro2017rate}.
\item {\bf R-IGS-2IC}: Our proposed robust IGS in Section \ref{IX-ch}.
\end{itemize}
 We first consider the enlarged uncertainty region to illustrate the effect of the aggregated phase uncertainty $\theta$ in the achievable worst-case rate region. 
Figure \ref{Fig3} shows the impact of the aggregated phase uncertainty on the rate region of the Z-IC. In this figure, the channel gains are equal to 1 for all links. 
In order to consider the effect of the phase uncertainty, we ignore at this point the estimation error in the channel gains. 
In other words, the channel gains are perfectly known in this example. 
There is a significant improvement over PGS 
when only user 2 employs IGS (which corresponds to the case $\theta\geq\pi/2$), 
whereas the performance gain slightly increases when user 1 also employs IGS (which corresponds to the case $\theta<\pi/2$). 
Notice that the proposed strategy for $\theta\geq\pi/2$ is 
the same as the one derived for the two-user IC, where only one user employs IGS. 
This result corroborates again that allowing only one user to use IGS is an effective way of improving the performance with imperfect CSI. 
\begin{figure*}[t!]
    \centering
    \begin{subfigure}[t]{0.5\textwidth}
        \centering
        \includegraphics[width=\textwidth]{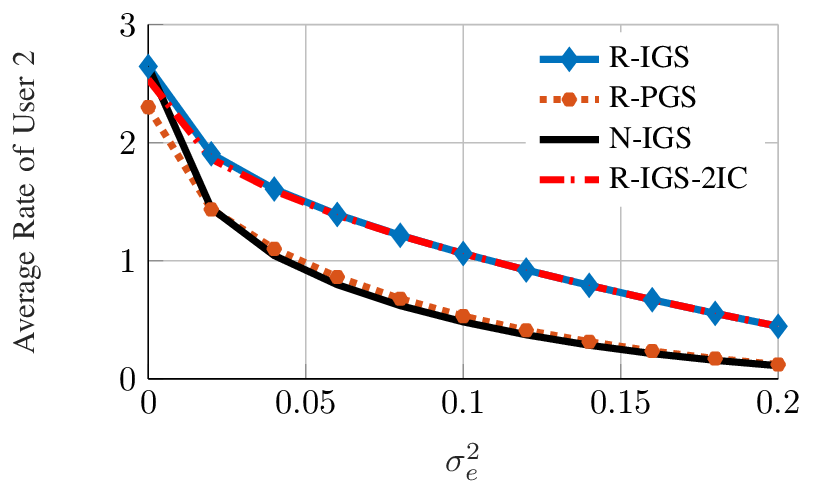}
        \caption{$\alpha=40\%$.}
    \end{subfigure}%
    ~ 
    \begin{subfigure}[t]{0.5\textwidth}
        \centering
        \includegraphics[width=\textwidth]{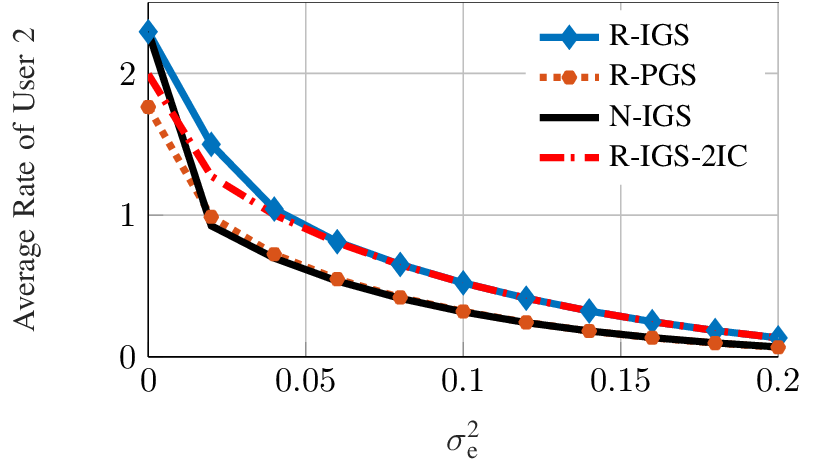}
        \caption{$\alpha=60\%$.}
    \end{subfigure}
    \caption{Average rate of user 2 versus the variance of channel estimation error for SNR=$10$dB in the Z-IC.}
	\label{Fig4}
\end{figure*}
\begin{figure*}[t!]
    \centering
    \begin{subfigure}[t]{0.5\textwidth}
        \centering
        \includegraphics{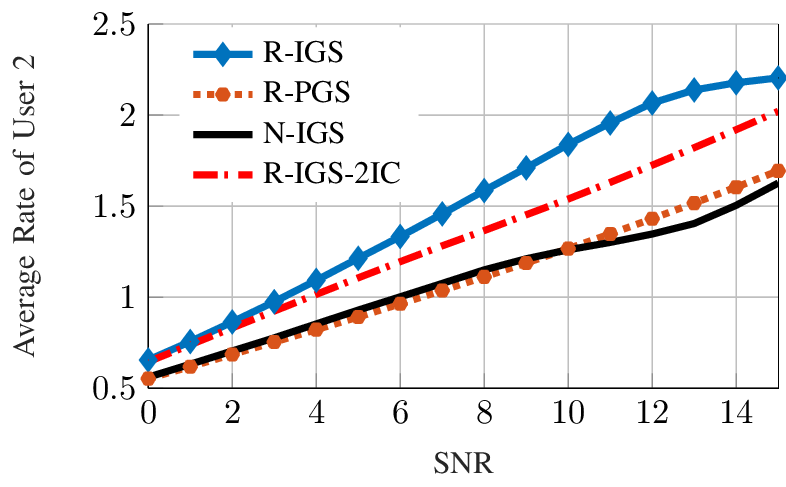}
        \caption{$\sigma^2_e=0.01$.}
    \end{subfigure}%
    ~ 
    \begin{subfigure}[t]{0.5\textwidth}
        \centering
        \includegraphics{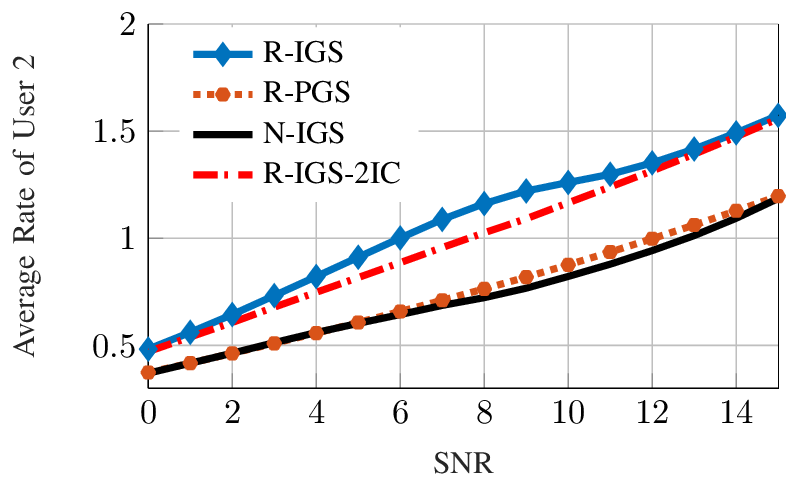}
        \caption{$\sigma^2_e=0.03$.}
    \end{subfigure}
    \caption{Average rate of user 2 versus SNR for $\alpha=60\%$ in the Z-IC.}
	\label{Fig4-n}
\end{figure*}

In Figs. \ref{Fig4} and \ref{Fig4-n}, we average the results over $10^3$ channel realizations.  
In order to provide a fair comparison, we depict the rate of user 2 for a fixed rate of user 1. 
In other words, we reduce the power of user 2 in the non-robust PGS and IGS approaches to achieve the same worst-case rate for user 1 as with the robust schemes. 
The rate of user 1 is fixed to $\alpha\%$  
of its maximum worst-case achievable rate for each channel realization. 

In Fig. \ref{Fig4}, we show the rate of user 2 versus the variance of the estimation error for SNR=$10$dB, $\alpha=40\%$ and $\alpha=60\%$.
Figure \ref{Fig4} shows a gap between PGS and IGS as the channel estimation error increases. 
In Fig. \ref{Fig4-n}, we represent the rate of user 2 versus SNR for $\alpha=60\%$, $\sigma^2_e=0.01$, and $\sigma^2_e=0.03$. As can be observed, the robust IGS significantly outperforms the non-robust IGS as well as robust PGS. 
These figures show that the non-robust IGS does not provide any gain compared to the robust PGS in the Z-IC and may even perform worse than the robust PGS when the CSI is not accurate. 
However, if we design the parameters in a robust way, we can achieve a considerable gain even in presence of highly noisy CSI.
\begin{figure*}
\setcounter{equation}{37}
\begin{align}
R_1^w(p_1,\kappa_2)&=\! \frac{1}{2}\!\log_2\!\left(1+\!\frac{p_1^2|\tilde{h}_{11}|^4\!+2p_1|\tilde{h}_{11}|^2(P_2|\tilde{h}_{12}|^2\!+\!\sigma^2)}{\sigma^4+2\sigma^2P_2|\tilde{h}_{21}|^2+(1-\kappa_2^2)P_2^2|\tilde{h}_{21}|^4}\!\right),\label{eqo-34}\\
\label{eqo-34-new-2cul}
R_2^w(p_1,\kappa_2)&=\! \frac{1}{2}\log_2\left(\frac{P_2^2|\tilde{h}_{22}|^4(1-\kappa_2^2)}{(\sigma^2+p_1|\tilde{h}_{12}|^2)^2}+\frac{2P_2|\tilde{h}_{22}|^2}{\sigma^2+p_1|\tilde{h}_{12}|^2}+1\right).
\end{align}
\setcounter{equation}{41}
\begin{subequations}\label{new-eq-2cul-42}
\begin{alignat}{4}
\frac{\partial\bar{R}_1(p_1)}{\partial p_1}&=\frac{(2\zeta_1p_1+\beta_1)(\zeta_2p_1^2+\beta_2p_1+\tau)-(\zeta_1p_1^2+\beta_1p_1)(2\zeta_2p_1+\beta_2)}{(\zeta_2p_1^2+\beta_2p_1+\tau)^2},\\
&=\frac{(\zeta_1\beta_2-\zeta_2\beta_1)p_1^2+2\zeta_1\tau p_1+\beta_1\tau}{(\zeta_2p_1^2+\beta_2p_1+\tau)^2}.\label{sa-34}
\end{alignat}
\end{subequations}
\hrulefill
\end{figure*}

In these figures, we also compare our robust Z-IC design with our robust design for the two-user IC, in which only one user (user 2) employs IGS. 
As can be observed, our robust Z-IC design performs better than our robust two-user IC design for low channel estimation error; on the other hand, these algorithms perform similarly when the channel estimation error is high. 
This suggests that high-quality CSI permits both users to employ IGS in order to make the signal and interference as close to orthogonal as possible. If the channel phase is not reliable ($\theta\geq\pi/2$), the robust design employs $\kappa_1=0$, which corresponds to the same 
robust design as for the two-user IC. In this case, IGS still provides a significant performance increase (see Fig. \ref{Fig3}), while keeping a  CSI requirement similar to that of PGS.

\section{Conclusion}
In this paper we have studied the robustness of IGS against imperfect CSIT in the single-antenna two-user IC and Z-IC. 
We have proposed robust designs for the two-user IC and Z-IC, which have closed-form solutions for the transmission parameters. 
We have derived closed-form conditions   
 when IGS outperforms PGS for the two-user IC and Z-IC in the presence of imperfect CSI.
We have shown through analytical studies that even if there is no reliable phase information, a robust IGS design can outperform a robust PGS design. 
In this case, one user may employ IGS, while the other employs PGS. 
We evaluated our analytical studies by simulations  
and showed that IGS 
permits a significant performance increase for the two-user IC and Z-IC. 
Our numerical results show that IGS is even more robust to imperfect CSI than PGS provided that the transmission parameters are properly designed, and thus IGS still pays off in the context of imperfect CSI.  


\appendices
\section{Proof of Theorem \ref{th:theorem2}}\label{app:theorem2}
Here, we assume that user 2 employs IGS and transmits with maximum power without loss of generality.  
In this strategy, the worst-case rates of users are given by \eqref{eqo-34} and \eqref{eqo-34-new-2cul} on the top of next page.
If 
$R_2^w(P_1,1)> \alpha R_{2,\text{max}}^w$, maximally improper for user 2 and maximum power transmission for user 1, i.e., $p_1=P_1$ and $\kappa_2=1$, is the solution of \eqref{pareto-2} since it maximizes the rate of user 1. 
Thus, in the following, we consider the case $R_2^w(P_1,1)< \alpha R^w_{2,\text{max}}$. 
The constraint $R_2^w(p_1,\kappa_2)= \alpha R^w_{2,\text{max}}$ is equivalent to
\setcounter{equation}{39}
\begin{equation}\label{eqo-35}
1\!-\!\kappa_2^2=\frac{(\sigma^2+p_1|\tilde{h}_{12}|^2)^2}{P_2^2|\tilde{h}_{22}|^4}\gamma_2(2\alpha)\!-2\frac{\sigma^2\!+\!p_1|\tilde{h}_{12}|^2}{P_2|\tilde{h}_{22}|^2}.
\end{equation}
It can be easily verified that (\ref{eqo-35}) results in (\ref{eqo-305}). 
Moreover, if we consider \eqref{eqo-35} as a quadratic function in $p_1$ and take its positive root, we obtain \eqref{eqo-3050}.
Additionally, plugging (\ref{eqo-35}) into (\ref{eqo-34}) yields \eqref{r-opt-2}.
Through \eqref{eqo-3050}, it is clear that $p_1$ is decreasing in $\kappa_2$. 
Thus, IGS can  improve the performance of the system if and only if (\ref{r-opt-2}) is  decreasing in $p_1$. 
This means that the argument of the logarithm function in (\ref{r-opt-2}) must be  decreasing in $p_1$.
Note that \eqref{r-opt-2} holds if and only if $R^w_2(p_1,\kappa_2)= \alpha R^w_{2,\text{max}}$, which results in $p_1\in [\mathcal{P}(1),\mathcal{P}(0)]$. 
As a result, if $P_1<\mathcal{P}(1)$, \eqref{r-opt-2} does not hold, and maximally IGS is optimal for user 2 as indicated before. 
This case is mentioned as condition 1 of Theorem \ref{th:theorem2}.

In the following, we derive the other conditions. 
Let us define $\bar{R}_1(p_1)$ as
\begin{equation}
\bar{R}_1(p_1)\triangleq\frac{\zeta_1p_1^2+\beta_1p_1}{\zeta_2p_1^2+\beta_2p_1+\tau},
\end{equation}
where $\zeta_1$, $\beta_1$, $\zeta_2$, $\beta_2$, and $\tau$ are defined as in Theorem 1.
The derivative of $\bar{R}_1(p_1)$ with respect to $p_1$ is given by \eqref{new-eq-2cul-42}.
Note that $\zeta_1$, $\zeta_2$ and $\beta_1$ are always positive. Additionally, the denominator of $\frac{\partial\bar{R}_1(p_1)}{\partial p_1}$ is positive. 
Thus, the behavior of $\bar{R}_1(p_1)$ with respect to $p_1$ depends  only on the sign of the numerator of  $\frac{\partial\bar{R}_1(p_1)}{\partial p_1}$, which we denote as $\bar{r}_1(p_1)$.
The sign of $\bar{r}_1(p_1)$ 
is related to $\zeta_1\beta_2-\zeta_2\beta_1$ and $\tau$ as these terms can be either positive or negative. 
If 
$\zeta_1\beta_2-\zeta_2\beta_1$ and $\tau$ have the same sign, $\bar{R}_1(p_1)$ is monotone in $p_1$. 
Otherwise, there is a positive extreme point, which is equal to the positive root of $\bar{r}_1(p_1)$ 
and given by $x_1^{\star}$ in (\ref{eqo-304}) if $(\zeta_1\beta_2-\zeta_2\beta_1)<0$, and $x_2^{\star}$ in \eqref{eqo-304-1} otherwise. 
Thus, $\bar{R}_1(p_1)$ is strictly decreasing in $p_1$ if $\zeta_1\beta_2-\zeta_2\beta_1<0$ and $\tau<0$. Hence, the rate of user 1 is maximized if user 1 transmits with $p_1=\mathcal{P}(1)$, and user 2 employs maximally IGS, which results in condition 2 of Theorem \ref{th:theorem2}. 
If $\zeta_1\beta_2-\zeta_2\beta_1>0$ and $\tau>0$, $\bar{R}_1(p_1)$ is strictly increasing in $p_1$, and PGS is the optimal solution for both users.
\begin{figure*}[t!]
    \centering
    \begin{subfigure}[t]{0.5\textwidth}
        \centering
        \includegraphics[width=\textwidth]{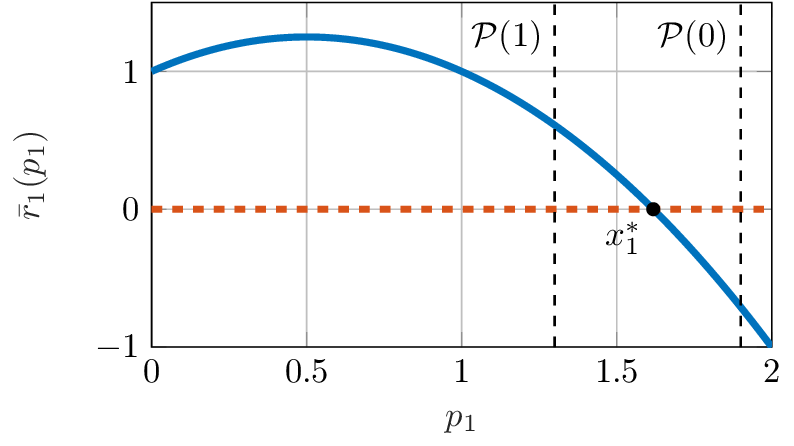}
        \caption{$\zeta_1\beta_2-\zeta_2\beta_1=-\zeta_1\tau p_1=-\beta_1\tau=-1$. }
    \end{subfigure}%
    ~ 
    \begin{subfigure}[t]{0.5\textwidth}
        \centering
        \includegraphics[width=\textwidth]{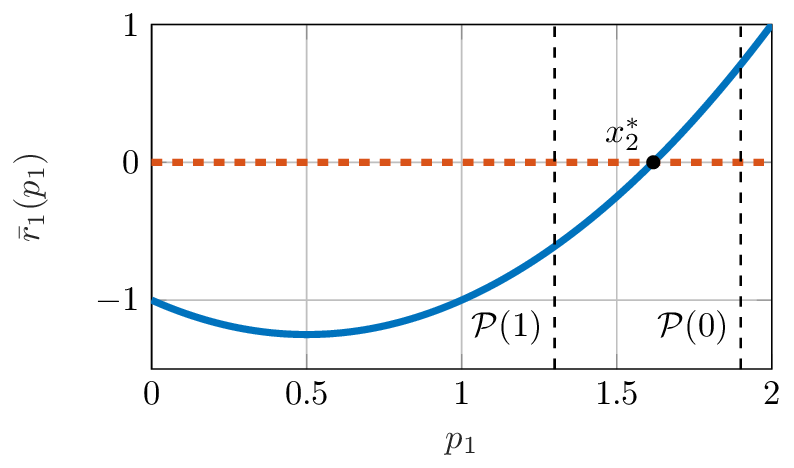}
        \caption{$\zeta_1\beta_2-\zeta_2\beta_1=-\zeta_1\tau p_1=-\beta_1\tau=1$.}
    \end{subfigure}
    \caption{Function $\bar{r}_1(p_1)$ versus $p_1$.}
	\label{R1-3}
\end{figure*}

Now we consider the case that $\bar{R}_1(p_1)$ is not  necessarily monotone in $p_1\in [\mathcal{P}(1),\mathcal{P}(0)]$, i.e., when $(\zeta_1\beta_2-\zeta_2\beta_1)\tau<0$.
Figure \ref{R1-3} 
shows the behavior of $\bar{r}_1(p_1)$ when $(\zeta_1\beta_2-\zeta_2\beta_1)\tau<0$. 
Note that the extreme point can be outside of $[\mathcal{P}(1),\mathcal{P}(0)]$. 
In other words, there is at most one extreme point in $p_1\in [\mathcal{P}(1),\mathcal{P}(0)]$. 
Since $\bar{r}_1(p_1)$ is a quadratic function of $p_1$, $\bar{r}_1(p_1)$ is either convex or concave. 
Let us first consider the concave case, i.e., $\zeta_1\beta_2-\zeta_2\beta_1<0$ and $\tau>0$. 
In this case, if $\mathcal{P}(0)<x_1^{\star}$, $\bar{R}_1(p_1)$ is strictly increasing in $p_1\in [\mathcal{P}(1),\mathcal{P}(0)]$, and thus, $p_1=\mathcal{P}(0)$ and PGS is optimal for both users. 
Therefore, we consider $\mathcal{P}(0)>x_1^{\star}$, which implies $\frac{\partial R^w_1(p_1)}{\partial p_1}|_{p_1=\mathcal{P}(0)}<0$. 
In this case, $\bar{R}_1(p_1)$ is maximized at $p_1=x^{\star}_1$ if $x_1^{\star}\in [\mathcal{P}(1),\mathcal{P}(0)]$, and at $p_1=\mathcal{P}(1)$ otherwise, which results in condition 3 of Theorem \ref{th:theorem2}.

Now we consider the case that $\bar{r}_1(p_1)$ is convex, i.e., $\zeta_1\beta_2-\zeta_2\beta_1>0$ and $\tau<0$. In this case, $\bar{R}_1(p_1)$ is strictly increasing in $p_1\in [\mathcal{P}(1),\mathcal{P}(0)]$ if $\mathcal{P}(1)>x_2^{\star}$, where $x^{\star}_2$ is given by \eqref{eqo-304}, which results in PGS being optimal. 
Moreover, $\bar{R}_1(p_1)$ is strictly decreasing in $p_1\in [\mathcal{P}(1),\mathcal{P}(0)]$ if $\mathcal{P}(0)<x_2^{\star}$, and thus, $p_1=\mathcal{P}(1)$ and maximally IGS is optimal for user 2 in this case.
If $x_2^{\star}\in [\mathcal{P}(1),\mathcal{P}(0)]$, $R_1(p_1)$ is strictly increasing in $p_1\in [x^{\star}_2,\mathcal{P}(0)]$ and strictly decreasing in $p_1 \in [\mathcal{P}(1),x_2^{\star}]$, as illustrated in Fig. \ref{R1-3}b.
Thus, IGS is beneficial in this case only if $R^w_1(\mathcal{P}(1))>R^w_1(\mathcal{P}(0))$, and maximally IGS is then the optimal strategy, which results in condition 4 of Theorem \ref{th:theorem2}.

\section{Proof of Theorem \ref{th:theorem1}}\label{app:theorem1}
\begin{figure*}
\hrulefill
\setcounter{equation}{43}
\begin{equation}\label{20}
2\frac{\partial q(\kappa_2,\theta)}{\partial \kappa^2_2}\!=\!\frac{q^2(\kappa_2,\theta)|\tilde{h}_{21}|^2(\gamma_1(2\alpha)+\cos^2\theta)}{q(\kappa_2,\theta)|\tilde{h}_{21}|^2\left[(\gamma_1(2\alpha)+1)(1-\kappa^2_2)-1+\kappa^2_2\sin^2\theta\right]+ \sigma^2\gamma_1(2\alpha)-P_1|\tilde{h}_{11}|^2}.
\end{equation}
\begin{equation}\label{21}
\frac{|\tilde{h}_{21}|^2(\gamma_1(2\alpha)+\cos^2\theta)\left[|\tilde{h}_{22}|^2q(\kappa_2,\theta)\sigma^{-2}(1-\kappa^2_2)+1\right]}{q(\kappa_2,\theta)|\tilde{h}_{21}|^2\left[(\gamma_1(2\alpha)+1)(1-\kappa^2_2)-1+\kappa^2_2\sin^2\theta\right]+\sigma^2\gamma_1(2\alpha)-P_1|\tilde{h}_{11}|^2}> |\tilde{h}_{22}|^2\sigma^{-2}.
\end{equation}
\setcounter{equation}{46}
\begin{equation}\label{eq-new-2000}
2\frac{\partial q(\kappa_2,\theta)}{\partial \kappa^2_2}=\frac{q(\kappa_2,\theta)|\tilde{h}_{21}|^2\gamma_1(2\alpha)-\frac{q(\kappa_2,\theta)P_1|\tilde{h}_{11}|^2\cos\theta}{\kappa_2}}{q(\kappa_2,\theta)|\tilde{h}_{21}|^2\gamma_1(2\alpha)(1-\kappa^2_2)+ \sigma^2\gamma_1(2\alpha)-P_1|\tilde{h}_{11}|^2(1+\kappa_2\cos\theta)}.
\end{equation}
\begin{equation}\label{tt19}
\frac{[q(\kappa_2,\theta)|\tilde{h}_{21}|^2\gamma_1(2\alpha)-\frac{P_1|\tilde{h}_{11}|^2\cos\theta}{\kappa_2}]\left[|\tilde{h}_{22}|^2q(\kappa_2,\theta)(1-\kappa^2_2)+\sigma^{2}\right]}{q(\kappa_2,\theta)|\tilde{h}_{21}|^2\gamma_1(2\alpha)(1-\kappa^2_2)+ \sigma^2\gamma_1(2\alpha)-P_1|\tilde{h}_{11}|^2(1+\kappa_2\cos\theta)}> q(\kappa_2,\theta)|\tilde{h}_{22}|^2.
\end{equation}
\begin{equation}\label{soli}
q(\kappa_2,\theta)\left[\!(|\tilde{h}_{21}|^2\!\!-\!|\tilde{h}_{22}|^2)\gamma_1(2\alpha)\sigma^2\!+\!P_1|\tilde{h}_{11}|^2|\tilde{h}_{22}|^2\!\left(\!1\!+\!2\kappa_2\cos\theta\!-\!\frac{\cos\theta}{\kappa_2}\!\right)\!\right]\!-\!\sigma^2P_1|\tilde{h}_{11}|^2\frac{\cos\theta}{\kappa_2}>0.
\end{equation}
\hrulefill
\end{figure*}

We first derive the condition that results in $\left.\frac{\partial R^w_2(\kappa_2,\theta)}{\partial \kappa_2^2}\right|_{\kappa_2=0}>0$. Then, we prove that if $\left.\frac{\partial R^w_2(\kappa_2,\theta)}{\partial \kappa^2_2}\right|_{\kappa_2=0}>0$ holds, we have $\frac{\partial R^w_2(\kappa_2,\theta)}{\partial \kappa^2_2}>0$ for $\kappa_2>0$. The term $\frac{\partial R^w_2(\kappa_2,\theta)}{\partial \kappa^2_2}>0$ is equivalent to
\setcounter{equation}{42}
\begin{multline}\label{19}
2\frac{\partial q(\kappa_2,\theta)}{\partial \kappa^2_2}\left[|\tilde{h}_{22}|^2q(\kappa_2,\theta)\sigma^{-2}(1-\kappa^2_2)+1\right]\\
> q^2(\kappa_2,\theta)|\tilde{h}_{22}|^2\sigma^{-2}.
\end{multline}
Now we first consider (\ref{q22}) for $\kappa_1<1$, and then prove that if $\frac{\partial R^w_2(\kappa_2,\theta)}{\partial \kappa^2_2}|_{\kappa_2=0}>0$, the rate of user 2 remains strictly increasing in $\kappa_2$ when $q(\kappa_2,\theta)$ is derived based on the expression in (\ref{q22}) for $\kappa_1=1$. We can obtain $\frac{\partial q(\kappa_2,\theta)}{\partial \kappa^2_2}$ by taking the derivative of (\ref{q22}) with respect to $\kappa_2^2$ for $\kappa_1<1$. That is given by \eqref{20} on the top of next page.
By replacing (\ref{20}) in (\ref{19}), we have \eqref{21}. 
Equation (\ref{21}) is simplified to
\setcounter{equation}{45}
\begin{equation}\label{22}
\frac{|\tilde{h}_{21}|^2}{|\tilde{h}_{22}|^2}>\frac{\sigma^2\gamma_1(2\alpha)-P_1|\tilde{h}_{11}|^2-q(\kappa_2,\theta)|\tilde{h}_{21}|^2\cos^2\theta}{\sigma^{2}(\gamma(2\alpha)+\cos^2\theta)}.
\end{equation}
From (\ref{q2}), it can be seen that $q(\kappa_2,\theta)|\tilde{h}_{21}|^2$ is independent of $\frac{|\tilde{h}_{21}|^2}{|\tilde{h}_{22}|^2}$, and $q_2(\kappa_2=0,\theta)|\tilde{h}_{21}|^2=\frac{P_1}{\gamma_1(\alpha)}-\sigma^2$ which results in (\ref{k2}). 
Moreover, $q(\kappa_2,\theta)$ is a strictly increasing function of $\kappa_2$, which implies that the inequality holds when $\kappa_2>0$. 
Now we prove that if (\ref{22}) holds, $\frac{\partial R^w_2(\kappa_2,\theta)}{\partial \kappa^2_2}>0$ for the case $\kappa_1=1$.
In this case, we have \eqref{eq-new-2000}.
Thus, $\left.\frac{\partial R^w_2(\kappa_2,\theta)}{\partial \kappa^2_2}\right|_{\kappa_2=0}>0$ is equivalent to \eqref{tt19}.
The expression in (\ref{tt19}) is simplified to \eqref{soli}.
It is easy to see that the left-hand side of (\ref{soli}) is strictly increasing in $\kappa_2$ as we are considering $0\leq \theta\leq\frac{\pi}{2}$. Therefore, if (\ref{soli}) holds for $\kappa_2=\overline{\kappa}$, it will hold as well for $\kappa_2>\overline{\kappa}$. 
We now show that such a $\overline{\kappa}$ exists. We have proved that, if (\ref{22}) holds, we have $\frac{\partial R^w_2(\kappa_2,\theta)}{\partial \kappa^2_2}>0$ as long as $q(\kappa_2,\theta)$ is derived from the first equation in (\ref{q22}). 
This implies that $\frac{\partial R^w_2(\kappa_2,\theta)}{\partial \kappa^2_2}>0$ for $\kappa_2=\overline{\kappa}=\frac{P_1|\tilde{h}_{11}|^2}{q(\kappa_2,\theta)|\tilde{h}_{21}|^2\cos\theta}$, where $\overline{\kappa}$ is the smallest value of $\kappa_2$ such that $\kappa_1=1$ (see (\ref{k1})). Hence, \eqref{21} and \eqref{22} are equivalent for $\kappa_2=\overline{\kappa}$, which results in (\ref{soli}) holding as well for $\kappa_2=\overline{\kappa}$. As a result, the rate of user 2 is strictly increasing in $\kappa_2$ if $\frac{\partial R^w_2(\kappa_2,\theta)}{\partial \kappa^2_2}>0$ holds for $\kappa_2=0$.
Thus, we choose the maximum possible $\kappa_2$ when improper signaling is optimal, which results in (\ref{optk22}).

\bibliographystyle{IEEEtran}
\bibliography{ref2}
\end{document}